\newcommand{\opt}{\mathsf{opt}}
\newcommand{\cif}{C_{\mathsf{if}}}
\newcommand{\celse}{C_{\mathsf{else}}}
\newcommand{\SC}{\mathsf{SC}}
\newcommand{\CSC}{C_{\mathsf{\SC}}}
\newcommand{\CSCsets}{C_{\mathsf{\SC,sets}}}
\newcommand{\CSCpen}{C_{\mathsf{\SC,pen}}}
\newcommand{\CNIfac}{\CSCsets}
\newcommand{\CNIcon}{\CSCpen}
\newcommand{\pcsc}{\mathsf{PC SC}}
\newcommand{\cost}{\mathsf{cost}}
\newcommand{\nwsf}{\mathsf{NWSF}}
\newcommand{\adv}{\mathsf{Adv}}
\newcommand{\Boundary}{\bd B}
\newcommand{\greedy}{\mathsf{greedy}}
\newcommand{\Conf}{\mathsf{Con}}
\renewcommand{\S}{\mathcal{S}}
\newcommand{\mcost}{\mathsf{cost}}
\newcommand{\task}{\mathsf{task}}
\newcommand{\TaskSpace}{\mathcal{T}}
\newcommand{\guess}{\mathsf{guess}}
\newcommand{\ktild}{\tilde{k}}
\newcommand{\Rsc}{2^{j-5}}
\newcommand{\Csc}{2^{j-6}}
\newcommand{\Rinc}{2^{j-5}}
\newcommand{\Rfac}{2^{j-3}}
\newcommand{\Rscd}{R_\mathsf{sc}}
\newcommand{\Cscd}{\frac12 R_\mathsf{sc}}
\newcommand{\Rincd}{R_\mathsf{+}}
\newcommand{\Rfacd}{R_\mathsf{fac}}
\newcommand{\rnote}[1]{}
\DeclareFontFamily{U}{dice3d}{}
\DeclareFontShape{U}{dice3d}{m}{n}{<-> s*[4] dice3d}{}
\algnewcommand\algorithmicwhen{\textbf{when}}\algdef{SE}[WHEN]{When}{EndWhen}[1]{\algpx@startIndent\algpx@startCodeCommand\algorithmicwhen\ #1\ \algorithmicdo}{\algpx@endIndent\algpx@startCodeCommand\algorithmicend\ \algorithmicwhen}
\pretocmd{\When}{\algpx@endCodeCommand}{}{}
\def\pgfdecoratedcontourdistance{0pt}
\pgfmathsetlengthmacro\pgfdecoratedcontourdistance{#1}}
\let\pgf@decorate@firstsegmentangle\pgfdecoratedangle]{\pgfpathmoveto{\pgfpointlineattime{.5}
      {\pgfqpoint{0pt}{\pgfdecoratedcontourdistance}}
      {\pgfqpoint{\pgfdecoratedinputsegmentlength}{\pgfdecoratedcontourdistance}}}}\state{draw}[next state=draw, width=\pgfdecoratedinputsegmentlength]{\ifpgf@decorate@is@closepath@ \pgfmathsetmacro\pgfdecoratedangletonextinputsegment{-\pgfdecoratedangle+\pgf@decorate@firstsegmentangle}\fi
    \pgfmathsetlengthmacro\pgf@decoration@contour@shorten{-\pgfdecoratedcontourdistance*cot(-\pgfdecoratedangletonextinputsegment/2+90)}\pgfpathlineto
\tikzset{
  contour/.style={
    decoration={
      name=contour lineto closed,
      contour distance=#1
    },
    decorate}}
\newtheorem{claim}{Claim}
\newtheorem{theorem}{Theorem}
\newtheorem{corollary}{Corollary}
\crefname{claim}{Claim}{Claims}
\title{To buy or not to buy: deterministic rent-or-buy problems on node-weighted graphs}
\author{Sander Borst\\MPI \and Moritz Venzin\\Bocconi}
\newcommand{\bd}{\mathsf{bd}}
\newcommand{\nmfl}{\mathsf{SC}}
\begin{document}
\maketitle

\begin{abstract}
  
    We study the \emph{rent-or-buy} variant of the online Steiner forest problem on \emph{node- and edge-weighted} graphs. For $n$-node graphs with at most $\bar{n}$ non-zero node-weights, and at most $\ktild$ different arriving terminal pairs, we obtain the following:
    \begin{itemize}
        \item A \emph{deterministic}, $O(\log n \log \bar{n})$-competitive algorithm. This improves on the previous best, $O(\log^4 n)$-competitive algorithm obtained by the black-box reduction from~\cite{BartalCharikarIndyk01} combined with the previously best deterministic algorithms for the simpler ``buy-only'' setting.
        \item A \emph{deterministic}, $O(\bar{n}\log \ktild)$-competitive algorithm. This generalizes the $O(\log \ktild)$-competitive algorithm for the purely edge-weighted setting from~\cite{umboh_online_2014}.
        \item A \emph{randomized}, $O(\log \ktild \log \bar{n})$-competitive algorithm. All previous approaches were based on the randomized, black-box reduction from~\cite{AwerbuchAzarBartal96} that achieves a $O(\log \ktild \log n)$-competitive ratio when combined with an algorithm for the ``buy-only'' setting.
    \end{itemize}
    Our key technical ingredient is a novel charging scheme to an instance of \emph{online prize-collecting set cover}. This allows us to extend the witness-technique of~\cite{umboh_online_2014} to the node-weighted setting and obtain refined guarantees with respect to $\bar{n}$, already in the much simpler ``buy-only'' setting.
\end{abstract}

\section{Introduction}
The \emph{Steiner forest} problem is a cornerstone of network design. 
Given a graph $G=(V,E)$ and pairs of terminals $T \subseteq V\times V$, the goal is to select a minimum weight subgraph that connects each pair. 
In the \emph{online, node-weighted} setting, there is a cost function $c: V \rightarrow \mathbb{R}_{\geq 0}$ on the vertices of the graph, and terminal pairs arrive one-by-one. After every arrival, we need to ensure that the pair is connected. 
Crucially, decisions cannot be altered in hindsight, i.e.\,we must \emph{augment} a set of selected vertices so that their induced subgraph connects all arrived pairs so far. The goal is to minimize the total cost of selected vertices. In this paper, we consider the online node-weighted Steiner forest problem in the \emph{rent-or-buy} setting. 
Given some parameter $M$, every vertex can either be \emph{rented} at cost $c_v$, in which case, we may only use it to connect the current pair of terminals, or \emph{bought} at cost $M\cdot c_v$, in which case it may be used indefinitely from then on. 
The goal is to minimize the total cost of rented vertices and bought vertices. More specifically, the aim is to minimize the \emph{competitive ratio}; the supremum of the ratio between the cost of the online algorithm, and the minimum cost of any feasible solution, for any sequence of terminal pairs.

This problem is centrally situated between several prominent problems in (online) network design.
Indeed, the node-weighted setting sits between the \emph{edge-weighted} setting and the $\emph{directed}$ setting\footnote{To reduce the edge-weighted to the node-weighted setting, we replace each edge by an unweighted edge subdivided by a node of node-weight equal to the original edge-weight. 
To reduce the node-weighted to the directed setting, we turn each edge into two (unweighted) directed edges, and split each vertex into $v_{\text{in}}$ and $v_{\text{out}}$. For each vertex $v$, we direct all incoming edges to $v_{\text{in}}$, all outgoing edges from $v_{\text{out}}$, and set the weight of the directed edge $v_{\text{in}}$ to $v_{\text{out}}$ to be $c_v$.\label{Footnote:edge_to_node}}. 
On the other hand, the rent-or-buy setting generalises over the simpler ``buy-only'' setting, and serves as a stepping stone for the even more general \emph{buy-at-bulk} setting.\footnote{The cost of selecting $x$ times an edge/vertex of cost $c$ is $f(x)\cdot c$, for some concave function $f(\cdot)$. 
The rent-or-buy setting is a special case by setting $f(x) := \min\{x, M\}$. The buy-only setting corresponds to $f(x) := \min\{x, 1\}.$} 
The latter is often also referred to as \emph{multicommodity buy-at-bulk}. 
There have been considerable efforts in devising online algorithms for these problems, both for the Steiner forest and the less general Steiner tree setting. 
We summarize those algorithmic results in Table~\ref{intro:state_of_the_art}.

\begin{table}[h!]
    \centering

            \renewcommand{\arraystretch}{1} \begin{tabular}{>{\centering\arraybackslash}m{1cm}!{\vrule width 2pt} *{3}{>{\centering\arraybackslash}p{3.5cm}}} & edge-weighted & node-weighted & directed \\ \hline
                Buy &  $\begin{matrix}
                    \\
                    \Theta(\log k)\\
                    \cite{ImaseWaxman91, BermanCoulston97}\\
                    \\
                \end{matrix}$  & $\begin{matrix}
                    \\
                    \Theta(\log k \log n)^{\raisebox{0.7mm}{\text{\faDiceTwo}\,,\,} \raisebox{0.7mm}{\text{\faEyeSlash}}} / \Theta(\log^2 n )\\
                    \cite{NaorPanigrahiSingh11, HajiaghayiLiaghatPanigrahi13, HajiaghayiLiaghatPanigrahi14, BorstEliasVenzin25}\\
                    \\
                \end{matrix}$   &   \\
                RoB & $\begin{matrix}
                    
                    \Theta(\log \ktild)\\
                    \cite{AwerbuchAzarBartal96, umboh_online_2014}
                    \\
                \end{matrix}$  &  $\begin{matrix}
                    \\
                    \Theta(\log^2 n)^{\raisebox{0.7mm}{\text{\faDiceSix}}} / O(\log^4 n)\\
                    \cite{BartalCharikarIndyk01, AwerbuchAzarBartal96} + \cite{bienkowski_nearly_2021, BorstEliasVenzin25}\\
                    \\
                \end{matrix}$&   \\
                BaB & $\begin{matrix}
                    \\
                    \Theta(\log \ktild)^{\raisebox{0.7mm}{\text{\faTree}}} \\
                    \cite{GuptaRaviTalwarUmboh_Spanners}\\
                    \\
                \end{matrix}$   &   & $\begin{matrix}
                    O(\text{poly}\log n)^{\raisebox{0.7mm}{\text{\faTree\,,\,\faDiceFour\,,\,\faEyeSlash\,,\,\faCalculator}}}\\
                    \cite{EneChakrabartyKrishnaswamyPanigrahi}
                \end{matrix}$
            \end{tabular}
    \caption{State of the art competitive ratios for online network design. The number of vertices in the graph is denoted by $n$, the number of arriving terminal pairs by $k$, and the number of \emph{distinct} terminal pairs by $\ktild$. Ratios written with a $\Theta(\cdot)$ are asymptotically tight. Results marked with~\faTree\ are only valid for the case of single-source (e.g. analog to Steiner tree). Results marked with a dice~\faDiceFive\ are randomized, the corresponding competitive ratio holds in expectation over the random bits of the algorithm. Results marked with~\faEyeSlash\ only hold against oblivious adversaries, e.g.\,the adversary cannot observe the algorithms actions and choose the next request adaptively. The algorithm for online directed buy-at-bulk only applies to single-source buy-at-bulk, the version for multicommodity being at least as hard as \textsc{Label Cover},~\cite{DodisKhanna}. It runs in \emph{quasi-polynomial} time~\faCalculator. }\label{intro:state_of_the_art}
\end{table}

The rent-or-buy setting is a common theme in decision making. Problems such as \emph{ski-rental}, \emph{TCP-acknowledgment}, \emph{snoopy caching} or the \emph{parking permit problem}, are prototypical examples, see for instance~\cite{KarlinKenyonRandallTCP, KarlinManasseMcGeoghOwicki90, Meyerson05}. More related to the context of network design is the \emph{file replication problem}, which corresponds to a rent-or-buy problem on a tree, see~\cite{BlackSleator89, BartalFiatRabani}. The first to study the rent-or-buy setting in the context of Steiner problems were Awerbuch, Azar and Bartal, see~\cite{AwerbuchAzarBartal96}. In fact, they show that for any online problem that can be cast as a type of \emph{metrical task system}, network design problems being a particular case, there is a simple procedure to adapt an online algorithm to the rent-or-buy setting. More specifically, given any online algorithm that is $c$-competitive against \emph{adaptive online adversaries} in the buy-only setting, their procedure yields a \emph{randomized} online algorithm for the rent-or-buy setting, that is $O(c)$-competitive against adaptive online adversaries.\footnote{In Section~\ref{subsec:adversaries} we give a formal overview of the adversarial settings.} Subsequently, this was further extended to yield a deterministic procedure that yields a $O(c^2)$-competitive deterministic online algorithm, see~\cite{BartalCharikarIndyk01}. 

Using these two procedures and the corresponding algorithms from the edge-weighted setting, see~\cite{ImaseWaxman91, BermanCoulston97}, this directly yields a $\Theta(\log \ktild)$-competitive randomized online algorithm for Steiner forest in the rent-or-buy setting, as well as an $O(\log^2 \ktild)$-competitive deterministic algorithm. Here, we denote by $\ktild$ the number of \emph{distinct} terminal pairs, as contrary to the buy-only setting, terminal pairs may reappear. Later, a $\Theta(\log \ktild)$-competitive deterministic online algorithm was given in~\cite{umboh_online_2014}.

Contrary to the edge-weighted setting, the competitive ratios of all known algorithms for online node-weighted Steiner problems (in the buy-only setting) depend on whether the adversary is adaptive or \emph{oblivious} (non-adaptive). 
The first approaches were inherently randomized and only hold against oblivious adversaries,~\cite{NaorPanigrahiSingh11, HajiaghayiLiaghatPanigrahi14}. The subsequent algorithm from~\cite{HajiaghayiLiaghatPanigrahi14}, was then made deterministic at a small loss, by giving a deterministic online algorithm for non-metric facility location, see~\cite{bienkowski_nearly_2021}. Consequently, since any deterministic online algorithm is competitive against adaptive adversaries, combined with the procedure from~\cite{AwerbuchAzarBartal96}, this yielded the first $O(\log^2 n \log \ktild)$-competitive algorithm for node-weighted Steiner forest in the rent-or-buy setting. 
The most recent online algorithm for node-weighted Steiner forest achieves a competitive ratio $\Theta(\log k \log n)$ against oblivious adversaries. It can be derandomized, resulting in a competitive ratio of $\Theta(\log^2 n)$, see~\cite{BorstEliasVenzin25}. We note that these two bounds are analogous to (and match) the $\Theta(\log k \log m)$- resp. $\Theta(\log n \log m)$-competitive ratios for online Set Cover in the randomized / deterministic setting\footnote{(Online) Set Cover reduces to (online) node-weighted Steiner forest / tree. 
$k$ corresponds to the number of arriving elements / pairs of arriving terminals, $m$ to the number of vertices in the graph / sets, and $n$ to the number of possible elements / possible pairs of terminals (\smash{$n = {m \choose 2}$}). See~\cite[Section~$2.2$]{BorstEliasVenzin25} for an illustration.},~\cite{alon_general_2006, Korman04, BuchbinderNaor09}. These bounds are asymptotically tight; in particular, it is not possible to obtain a deterministic algorithm with a competitive ratio of $O(\log k \log n)$. 

Applying the procedures from \cite{AwerbuchAzarBartal96, BartalCharikarIndyk01} to the deterministic algorithm from \cite{BorstEliasVenzin25} in a black-box way, we obtain a deterministic $O(\log^4 n)$ online algorithm, as well as a randomized, $O(\log^2 n)$-competitive online algorithm. It is also possible to do the same for the randomized version of this algorithm, yielding an $O(\log \ktild \log n)$-competitive algorithm. However, this needs a more careful argument. We will elaborate on this in \cref{subsec:related_work}.

This leaves quite a gap between the randomized and the deterministic setting. As suggested by the edge-weighted setting, it might indeed be possible to completely match the respective competitive ratios from the buy-only setting. In this paper we confirm this. In fact, we obtain a refined bound which depends on $\bar{n}$, the number of non-zero node-weights\footnote{This also applies to nodes of small degree: any node $v$ with constant degree can be replaced by a node with node-weight $0$, where we add $c_v$ to its incident edge-weights. This increases the overall cost by at most a constant.}. Our main result is a deterministic, $O(\log n \log\bar{n})$-competitive online algorithm for Steiner forest in the rent-or-buy setting. This even improves over the respective algorithm in~\cite{BorstEliasVenzin25} for the buy-only setting. It can be modified to give a deterministic $O(\bar{n}\log \ktild)$-competitive algorithm, which generalizes the deterministic algorithm of~\cite{umboh_online_2014} to the edge- and node-weighted setting. Finally, our last result is a randomized, $O(\log \ktild \log \bar{n})$-competitive algorithm against oblivious adversaries. 

\subsection{Our results}\label{subsec:our_results}
We now formally present our results. We consider edge- and node-weighted graphs with $n$ vertices, denote by $\bar{n}$ the number of nodes with non-zero node-weight, and denote by $\ktild$ the number of distinct terminals pairs arriving. 

All our results are obtained through the same algorithmic framework, but using three different algorithms for prize-collecting set cover as a subroutine. We present the three resulting results separately.

\begin{theorem}
\label{intro:main_thm_adaptive}
There is a deterministic, polynomial-time online algorithm for
node-weighted Steiner forest in the rent-or-buy setting
which is $O(\log n \log \bar{n})$-competitive
against an adaptive adversary. If the terminal pairs are known to come from some subset $T \subseteq V \times V$, the competitive ratio can be improved to $O(\log |T| \log \bar{n})$. 
\end{theorem}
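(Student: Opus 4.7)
The plan is to reduce online rent-or-buy node-weighted Steiner forest to an online instance of prize-collecting set cover, henceforth $\pcsc$, and then to invoke a deterministic online $\pcsc$ algorithm with competitive ratio $O(\log N_u \log N_s)$, where $N_u = \mathrm{poly}(n)$ is the universe size and $N_s = \mathrm{poly}(\bar{n})$ is the number of sets of positive cost. This gives the claimed $O(\log n \log \bar{n})$ bound.

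First, I would maintain a $\pcsc$ instance dynamically as terminal pairs arrive. Sets correspond to vertices of $G$, available at either the rental price $c_v$ or the buying price $M c_v$. For each arriving pair $(s,t)$ and each dyadic distance scale $2^j$ in a range determined by the node-weighted distance $c(s,t)$, I would introduce a witness element that demands either that a node-weighted ball of radius $\Theta(2^j)$ around $s$ or $t$ be hit by a bought vertex, or that a penalty equal to the rental cost at that scale be paid. This extends the witness construction of~\cite{umboh_online_2014} to node-weighted graphs; the dyadic structure of the scales ensures that the total penalty incurred on $(s,t)$ across all scales is within a constant factor of directly renting an $(s,t)$-path.

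Next, I would prove the two charging inequalities. Upward: any $\pcsc$ solution yields a feasible rent-or-buy Steiner forest solution of the same order of cost, since bought sets can be promoted to bought vertices, and paid penalties at scale $j$ fund a rented connection of length $\Theta(2^j)$. Downward: an optimal $\opt$ rent-or-buy solution induces a feasible $\pcsc$ solution of cost $O(\opt)$, by buying the sets corresponding to $\opt$'s bought vertices and paying penalties at the scale matching $\opt$'s rented connections. Combined with the online $\pcsc$ algorithm, this gives the stated competitive ratio. The ``sets'' factor tightens to $O(\log \bar{n})$ because only vertices of non-zero node-weight contribute sets of positive cost, even across the $O(\log n)$ scales; the ``universe'' factor is $O(\log n)$ since witnesses are indexed by (terminal pair, scale) with at most $n^2$ possible pairs. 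Restricting arrivals to $T\subseteq V\times V$ simply replaces $n^2$ by $|T|$ in the universe count, yielding the $O(\log|T|\log\bar n)$ refinement.

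The main obstacle, and the reason why a black-box composition of the buy-only algorithm of~\cite{BorstEliasVenzin25} with the deterministic reduction of~\cite{BartalCharikarIndyk01} is lossier, lies in the downward charging. In the node-weighted setting, $\opt$ may rent one pair through vertices it has bought for unrelated pairs, so the witnesses and penalties must be sufficiently \emph{scale-local} that each unit of $\opt$-cost, whether spent on renting or on buying, is charged at most once. Engineering the witness/penalty trade-off on node-weighted balls so that this scale-by-scale charging is simultaneously valid and tight is the technical crux of the argument and the novel contribution announced in the introduction.
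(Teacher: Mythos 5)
Your high-level plan — reduce to online prize-collecting set cover over dyadic scales, with sets indexed by vertices and elements per (terminal, scale) — does match the paper's framework. But your ``upward'' charging, ``any $\pcsc$ solution yields a feasible rent-or-buy Steiner forest solution of the same order of cost,'' is where the argument breaks, and the gap is exactly the technical crux you deferred.

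The $\pcsc$ solution only decides which \emph{individual vertices} to buy and which penalties to pay. A collection of bought vertices does not, on its own, yield any connectivity between terminal pairs, and paid penalties at scale $j$ only cover the rental cost of pairs for which a penalty was actually paid. If $\pcsc$ instead opts to buy a set for some pair $(s,t)$, there is no budget coming out of $\pcsc$ to pay for connecting $s$ to $t$. The paper resolves this with two mechanisms that your proposal omits entirely: (i) the algorithm \emph{always} rents the greedy $s$--$t$ path in $G/A$ regardless of the $\pcsc$ decision, so feasibility is unconditional; and (ii) a witness/clustering mechanism — the sets $F_j$ and the counters $y_{w,j}$ — that decides when to \emph{buy paths} (not just vertices). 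When both endpoints are ``covered'' (near a witness, or near a node whose counter has reached $M$), the algorithm buys a shortest path between nearby witnesses. The entire content of Claims~\ref{claim:bound_celse} and~\ref{claim:bound_Fsum} is to bound the cost of these path-purchases and the greedy rentals by the $\pcsc$ cost; this is the node-weighted adaptation of Umboh's witness argument that the paper advertises as its novel ingredient. You gesture at ``engineering the witness/penalty trade-off'' as the open crux, but the witness sets, the counters, and the else-case of the algorithm are precisely the machinery needed, and their absence means your proposal has no route from a $\pcsc$ solution to a bounded-cost feasible Steiner forest.

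A smaller but consequential discrepancy: you take a black-box $O(\log N_u \log N_s)$ $\pcsc$ guarantee and a downward charge losing $O(1)$. The paper's downward charge (Claim~\ref{claim:bound_opt}) loses a factor $O(\log\ktild)$, and this is compensated by two deliberate design choices — a $\log\ktild$ multiplier built into the set costs, and a refined $\pcsc$ guarantee (Theorem~\ref{thm:rand_prize_collecting}) that bounds penalty cost by only $O(\log|\S|)\cdot\opt_{\mathsf{frac}}$ while set cost pays the extra $\log|\tilde X|$. Without this asymmetry the factors would not cancel to give $O(\log n\log\bar n)$.
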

We consider this to be our main result. All previously known deterministic algorithms use~\cite{BartalCharikarIndyk01}, and were a multiplicative factor of $O(\log^2 n)$ worse than the corresponding randomized version. This result closes this gap. We also note that it matches the currently best deterministic algorithm for the buy-only version (e.g. the special case $M = 1$) from~\cite{BorstEliasVenzin25}, and even slightly improves on it, as the dependency is now with respect to $\log\bar{n}$, instead of $\log n$.

The competitive ratio can be slightly improved to $O(\log\ktild\log\bar{n})$, to depend on the number of distinct \emph{actually} arriving terminals pairs $\ktild$, instead of the number of distinct terminal pairs $|T|$ that \emph{could} arrive. However, to do so, it is necessary to rely on randomization, as already demonstrated in the corresponding lower bounds for online set cover in~\cite{Alon_et_al_SetCover}. This again matches (and slightly improves) the corresponding result from~\cite{BorstEliasVenzin25} for the buy-only setting.

\begin{theorem}
\label{intro:main_thm_oblivious}
There is a randomized, polynomial-time online algorithm for
node-weighted Steiner forest in the rent-or-buy setting
which is $O(\log \ktild \log \bar{n})$-competitive
against an oblivious adversary. 
\end{theorem}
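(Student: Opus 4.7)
The plan is to reuse the algorithmic framework developed for Theorem~\ref{intro:main_thm_adaptive}, which reduces the online node-weighted rent-or-buy Steiner forest problem to an instance of online prize-collecting set cover ($\pcsc$) whose ground set is in bijection with the positive-weight nodes of $G$, so that the number of ``sets'' in the $\pcsc$ instance is $O(\bar{n})$. The only change compared to the deterministic setting is to plug in a \emph{randomized} $\pcsc$ subroutine whose guarantee depends on the number of distinct elements actually presented, rather than on the total universe size. Concretely, I would aim for a randomized online $\pcsc$ algorithm that is $O(\log \ktild_{\mathrm{PC}} \log \bar{n})$-competitive against an oblivious adversary, where $\ktild_{\mathrm{PC}}$ is the number of distinct elements in the constructed $\pcsc$ instance.

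The first step is to verify the reduction: I would check that when the terminal sequence has at most $\ktild$ distinct pairs, the $\pcsc$ instance obtained via the charging/witness scheme of the paper contains only $\mathrm{poly}(\ktild)$ distinct elements, so $\log \ktild_{\mathrm{PC}} = O(\log \ktild)$. This is the analogue, in the node-weighted setting, of the bound used by Umboh~\cite{umboh_online_2014} to pass from $\log n$ to $\log \ktild$, and it should follow from the same witness-based charging that powers Theorem~\ref{intro:main_thm_adaptive}.

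The second step is to supply the randomized $\pcsc$ routine. I would adapt the standard Alon et al.~/ Buchbinder--Naor primal--dual framework for online (fractional) set cover: maintain a fractional cover, update it multiplicatively each time a newly arrived element is insufficiently covered, and handle the prize-collecting option by treating the penalty as an additional ``set'' covering that element at its penalty cost. Randomized rounding is performed by duplicating each set $\Theta(\log \ktild_{\mathrm{PC}})$ times and selecting each copy with probability proportional to the fractional value. The standard analysis yields a competitive ratio of $O(\log \ktild_{\mathrm{PC}} \log \bar{n})$ in expectation against an oblivious adversary, and combines multiplicatively with the framework's reduction to give the claimed $O(\log \ktild \log \bar{n})$ bound.

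The main obstacle is to ensure that the reduction preserves the oblivious-adversary guarantee, that is, that the sequence of $\pcsc$ requests is a deterministic function of the input terminal sequence and of the optimum cost scale $\opt$, but not of the algorithm's internal random bits. This amounts to checking that the witness-based charging and the choice of which elements/sets to present to the $\pcsc$ subroutine are driven only by the adversary's input and by the fractional/integral cover state (which are themselves functions only of the past input under an oblivious adversary). Given that the framework builds the $\pcsc$ instance from the connectivity structure of arriving pairs and from a geometrically-spaced guess of $\opt$, this should go through, but it is the delicate point and the one I would write out carefully.
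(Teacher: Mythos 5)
Your overall plan matches the paper's: reuse the framework of Theorem~\ref{intro:main_thm_adaptive}, swap in a randomized prize-collecting set cover ($\pcsc$) subroutine whose competitive ratio depends logarithmically on the number of arriving elements (which is $O(\ktild\log\ktild)$) and on the number of sets ($O(\bar{n})$), and compose the bounds exactly as in the deterministic case.

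The one place where your proposal goes wrong is the final paragraph. You propose to show that the stream of elements fed to the $\pcsc$ subroutine is a deterministic function of the terminal sequence and the $\opt$-guess, and hence that the subroutine may be analyzed against an oblivious adversary. That is not true, and no amount of careful checking will make it true: whether a terminal $v$ counts as ``uncovered on layer $j$'' depends on the witness sets $F_j$ and counters $y_{w,j}$, and $F_j$ is updated whenever the (randomized) $\pcsc$ rounding decides to add a set $S_u$. So the $\pcsc$ subroutine's own random coins determine which future elements it will be asked to cover --- a genuine feedback loop. If you try to close this gap by appealing to the standard oblivious-adversary guarantee of randomized online set cover rounding, the argument does not go through.

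What the paper actually does is use the \emph{semi-adaptive} adversary model from~\cite{BorstEliasVenzin25} (stated here as Theorems~\ref{algo:nmfl_semiadaptive} and~\ref{thm:rand_prize_collecting}). The terminal sequence, which \emph{is} fixed obliviously, determines a super-instance $\hat{X}$ of at most $O(\ktild\log\ktild)$ potential elements $r_{v,j}$; the outer algorithm then adaptively selects a subset of $\hat{X}$ to present, where the selection may depend on the subroutine's random bits. The $O(\log|\hat{X}|\log|\mathcal{S}|)$ guarantee of the semi-adaptive rounding theorem handles exactly this. So you should replace your last paragraph with an invocation of the semi-adaptive guarantee rather than an obliviousness argument; the rest of your reduction and the bookkeeping of $\log\ktild_{\mathrm{PC}}=O(\log\ktild)$ and $|\mathcal{S}|=O(\bar n)$ are correct and match the paper.
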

If one does not care about the logarithmic dependency on $\bar{n}$, it is possible to make this deterministic. The resulting competitive ratio is $O(\bar{n}\cdot \log\ktild)$, it depends \emph{linearly} on the number of non-zero node-weights. This recovers the deterministic, $O(\log\ktild)$-competitive algorithm of~\cite{umboh_online_2014} in the edge-weighted setting.

\begin{theorem}\label{intro:main_cor_det}
    There is a deterministic, polynomial-time online algorithm for node-weighted Steiner forest in the rent-or-buy setting which is $O(\bar{n} \cdot \log \ktild)$-competitive against an adaptive adversary.
\end{theorem}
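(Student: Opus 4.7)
I would prove \cref{intro:main_cor_det} by instantiating the common algorithmic framework of \cref{intro:main_thm_adaptive,intro:main_thm_oblivious} with a particularly crude deterministic subroutine for online prize-collecting set cover. Recall that the framework reduces the node-weighted rent-or-buy instance, one geometric cost-scale at a time, to an online prize-collecting set cover instance whose candidate sets correspond to balls around the at most $\bar n$ non-zero-weight vertices; the edge-weighted portion of the solution is handled in parallel by a deterministic, Umboh-style witness charging that is $O(\log \ktild)$-competitive. The overall competitive ratio produced by the framework is $O(\rho)$, where $\rho$ is the competitive ratio of the prize-collecting set cover subroutine (which, in the case of \cref{intro:main_thm_oblivious}, is $O(\log \ktild \log \bar n)$).

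It therefore suffices to design a deterministic online prize-collecting set cover algorithm that is $O(\bar n \log \ktild)$-competitive on instances with at most $\bar n$ candidate sets and at most $\ktild$ distinct arriving elements. I would use the following rule: for every newly arrived uncovered element $e$, either pay $e$'s penalty or irrevocably purchase the cheapest set containing $e$, whichever is cheaper. Because at most $\bar n$ distinct sets can ever be bought, this rule is straightforwardly $O(\bar n)$-competitive against the ``full-coverage'' optimum through a per-purchase charging to an OPT set covering the triggering element. To compete with the \emph{prize-collecting} optimum (which may pay some penalties outright), I would add a ski-rental-style accounting that resolves each ``cover vs.\ pay penalty'' dilemma with an additional $O(\log \ktild)$ factor loss — deterministic here, and analogous in spirit to the corresponding randomized routine used in \cref{intro:main_thm_oblivious}. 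Combining the two losses gives $\rho = O(\bar n \log \ktild)$.

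Plugging this $\rho$ into the framework of \cref{intro:main_thm_adaptive,intro:main_thm_oblivious} yields the announced bound. In the purely edge-weighted regime ($\bar n = O(1)$) it collapses to $O(\log \ktild)$, recovering the deterministic algorithm of~\cite{umboh_online_2014}. The one technical point that needs verification is that the framework, originally phrased with the logarithmic-in-$\bar n$ subroutines in mind, remains black-box in $\rho$ when $\rho$ takes this weaker linear-in-$\bar n$ form; this is essentially immediate, since the framework's invariants nowhere use the specific structure (or randomization) of the subroutine. I expect the most delicate part of the proof to be the analysis of the crude subroutine against the prize-collecting optimum, since the obvious charging compares only to the ``cover everything'' optimum, and the ski-rental adjustment has to be carried out carefully to avoid paying an additional factor of $\bar n$ in the ``penalty'' accounting.
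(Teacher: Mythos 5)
Your high-level plan — instantiate the common framework with a cruder deterministic prize-collecting set cover subroutine whose set-cost competitiveness degrades to $O(\bar n)$ — matches the paper's route. However, there is a genuine gap, rooted in your claim that ``the overall competitive ratio produced by the framework is $O(\rho)$.'' The framework is \emph{not} black-box in a single ratio $\rho$: tracing through Claims~\ref{claim:bound_celse}--\ref{claim:bound_opt}, the algorithm's cost is bounded by $O(\CNIcon + \CNIfac/\log\ktild)$, and then $\opt_{\SC} \le O(\log\ktild)\opt_{\mathsf{RoB}}$, so the final ratio is $O(\rho_{\mathsf{pen}}\log\ktild + \rho_{\mathsf{sets}})$ where $\rho_{\mathsf{pen}}$ and $\rho_{\mathsf{sets}}$ are the \emph{separate} competitiveness factors for the penalty cost and the set cost. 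The set cost enjoys a $1/\log\ktild$ discount (this is also baked into the $\log(\ktild)\cdot M\cdot c_v$ set weights in the $\pcsc$ instance), but penalties pay the full $\log\ktild$ multiplier from Claim~\ref{claim:bound_opt}. So you cannot afford a $\log\ktild$ loss on the penalty side.

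This matters for your proposed subroutine. The memoryless rule ``pay the cheaper of the penalty or the cheapest covering set'' is by itself $\Omega(M)$-competitive: an element with penalty $p$ and cheapest covering set of cost $c_S > p$ that reappears $M$ times costs you $Mp$ while $\opt$ pays $c_S$. You correctly flag that a ski-rental component is needed, but you present it as an add-on costing an extra $O(\log\ktild)$ factor on penalties. Plugging $\rho_{\mathsf{pen}} = O(\log\ktild)$, $\rho_{\mathsf{sets}} = O(\bar n\log\ktild)$ into the framework gives $O(\log^2\ktild + \bar n\log\ktild)$, which does \emph{not} collapse to $O(\log\ktild)$ when $\bar n = O(1)$ — so the claimed recovery of~\cite{umboh_online_2014} breaks. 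The paper's subroutine (\cref{thm:det_prize_collecting}) avoids this by a primal-dual rule: each reappearance of $(e,p)$ gets a fresh dual variable $y_{(e,p)}$ that is raised until either the singleton-penalty constraint $y_{(e,p)} \le p$ becomes tight (pay the penalty) or the constraint for some set $\hat S$ does (buy it). The accumulation of dual mass across reappearances toward the set constraints \emph{is} the ski-rental — it is the core of the design, not a correction term. Crucially, weak duality gives $c(\mathsf{ALG}_{\mathsf{pen}}) \le (\text{D}) \le c(\opt_{\mathsf{frac}})$, i.e.\ $\rho_{\mathsf{pen}} = O(1)$, while each dual variable contributes to at most $|\S|+1$ constraints, giving $\rho_{\mathsf{sets}} = O(|\S|) = O(\bar n)$. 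These are exactly the asymmetric bounds the framework needs, and they are what produce the stated $O(\bar n\log\ktild)$ (in fact $O(\bar n + \log\ktild)$) with the edge-weighted limit behaving correctly.
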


\subsection{Our techniques}

On a high-level, we combine the \emph{witness} technique from the edge weighted variant for rent-or-buy from~\cite{umboh_online_2014} with an instance of \emph{prize-collecting set cover}. Before describing our approach, we briefly describe the approach for the edge-weighted setting, as well as the relevant techniques necessary for handling the node-weighted setting.

For the rent-or-buy setting, the difficulty lies in deciding whether it is worthwhile to buy an edge since we can reuse it at least $M$ times in the future, or whether we should rent. 
In the Ski Rental problem, e.g. rent-or-buy on a graph consisting of a single edge, a simple way to decide this and guarantee $2$-competitiveness is as follows: we rent $M$ times, then we buy. On a high level, this charging scheme was adapted to the edge-weighted Steiner problems in~\cite{umboh_online_2014}, surprisingly by simple greedy procedure (though the analysis is quite involved!): if the new request $r$ lies close to $M$ previous requests for which we rented, buy the greedy path the algorithm selects for~$r$. 
The cost of the bought path can be charged against these $M$ witnesses and all future requests close to any of them can be routed through the bought path. 
This ensures there cannot appear any clusters of $\Omega(M)$ rented paths for which it would have been beneficial to just buy once, allowing for a charging scheme against the optimal solution.

Such a greedy approach cannot work in the node-weighted setting, even in the buy-only setting. Because the weights are on the nodes, it is possible for some vertex $v$ to be close to many (eventually) arriving terminals, but, all these terminals are quite far from each other. 
Indeed, this happens if these terminals have small distance to~$v$, but $v$ has significant weight. Hence, even if we always buy a path for each arriving request, we might still end up with huge clusters of terminals for which an optimal offline solution only buys a single node. 
Hence, it is necessary to strategically buy certain nodes, which will decrease the distance of future requests. As shown in~\cite{NaorPanigrahiSingh11, HajiaghayiLiaghatPanigrahi13}, these nodes can be selected through an instance of online \emph{non-metric facility location}. 

We adapt these ideas as follows. We strategically buy nodes as in the buy-only setting, however, it is \emph{necessary} that we base our decision on an instance of online \emph{prize-collecting set cover}, a generalisation of the classical online set cover or non-metric facility location problem used in previous works,~\cite{NaorPanigrahiSingh11, HajiaghayiLiaghatPanigrahi14, BorstEliasVenzin25}. For each arriving terminal, such an instance can encode whether we should rent (pay the rental cost as penalty), or whether we should strategically buy nodes (at cost at least $M$ times the penalty). Crucially, such a prize-collecting variant of set cover can handle reappearances of terminals in case we opt to rent. In parallel, we run a charging scheme similar to that in~\cite{umboh_online_2014} to ensure that no more than $M$ terminals can ``cluster'', carefully balancing their cost against the cost of the prize-collecting instance of set cover. Finally, as a by-product of our conceptual simplification of using online (prize-collecting) set cover instead of (prize-collecting) non-metric facility location, our obtained bounds depend on $\bar{n}$, the number of non-zero node-weights, instead of $n$, the total number of vertices. This results in a framework that truly generalizes and recovers the best-known results from the edge-weighted setting, without incurring an extra $O(\log n)$ factor as in previous works.

\subsection{Related work}\label{subsec:related_work}

Most relevant to our work is the randomized procedure from~\cite{AwerbuchAzarBartal96}. This procedure can convert any algorithm for the buy-only setting into a randomized algorithm for the rent-or-buy setting with only a constant loss in the competitive ratio. It works in a very simple way: each request is passed to the buy-only algorithm with probability $1/M$, and for the remaining requests the algorithm rents. In their original paper, they only claim competitiveness against online adaptive adversaries when the algorithm for the buy-only setting is competitive against online adaptive adversaries. This would rule out using the $O(\log k\log n)$-competitive algorithm from~\cite{BorstEliasVenzin25}, as that algorithm is not competitive against adaptive adversaries. However, a slightly more careful analysis of the result from \cite{AwerbuchAzarBartal96} reveals that the same procedure can be applied, resulting in a randomized, $\Theta(\log \ktild \log n)$-competitive algorithm against oblivious adversaries. This is a weaker adversarial setting than that of an adaptive online adversary. We elaborate on this in Appendix~\ref{app:bbreduction}. \section{Preliminaries}

This section is divided in three parts. First, in Section~\ref{subsec:graph}, we present the necessary background from graph theory. In Section~\ref{subsec:online-problems}, we give a formal description of online problems under consideration. 
Finally, in Section~\ref{subsec:adversaries}, we describe the relevant adversarial settings

\subsection{Graph Theory}\label{subsec:graph}
We consider undirected graphs $G = (V, E)$ with a node-weight function $c: V \rightarrow \mathbb{R}_{\geq 0}$. This setting also captures edge-weights and is purely for convenience. For any pair of vertices~$u, v\in V$, we set $d_G(u,v)$ to be the distance with respect to $c(\cdot)$ between $u$ and $v$, excluding the weights of $u$ and $v$. Given some set $A \subset V$, the subgraph induced by $A$ is denoted by $G[A]$, and consists of all vertices in $A$ as well as all edges from $E$ that have both endpoints in $A$. $G/A$ denotes the graph $G$ where the weight of all nodes in $A$ have been set to $0$, and $d_{G/A}(u,v)$ denotes the shortest distance between $u$ and $v$ in $G/A$, again excluding the weights of $u$ and $v$. Finally, given some radius $r > 0$ and some vertex $u$, we define the open ball of radius $r$ around $u$ to be $B(u, r):=\{v \in V \mid d_{G}(u,v) + c_v \leq r\}$, and its boundary to be $\bd B(u, r):=\{v \in V \mid d_{G}(u,v) \leq r < d_G(u,v) + c_v\}$. We set $\bar{B}(u,r) := B(u, r) \cup \bd B(u, r)$.  

\subsection{Online problems}\label{subsec:online-problems}
In the online node-weighted Steiner forest problem ($\nwsf$) in the rent-or-buy setting, we are given a node-weighted graph upfront, as well as a parameter $M$. 
Then, one-by-one, we receive pairs of terminals $(s_1, t_1), \ldots, (s_k, t_k) \subseteq V\times V$. At each timestep $i\in [k]$, we need to maintain (augment) a set $B$ of bought nodes and select a set of nodes $R_i$ such that $(s_i, t_i)$ is connected in $G[B \cup R_i]$. The total cost is \smash{$M\cdot c(B) + \sum_{i \geq 1}^k c(R_i)$}. 
The \emph{Steiner tree} problem is defined analogous, except that we need to maintain a \emph{connected} subgraph connecting all the arrived terminals $s_1, \ldots, s_k$; this corresponds to the Steiner forest problem in the setting where the terminal pairs are of the form $(s_1, s_2), (s_1, s_3), \ldots, (s_1, s_k)$.

An instance of the set cover problem ($\SC$) is given by a set system $(X, \S, \cost)$, where $X$ is a groundset of elements, $\mathcal{S}\subseteq 2^X$ is a set of subsets of $X$, and $\cost: \mathcal{S} \rightarrow \mathbb{R}_{\geq 0}$ is a cost function. In the online setting, the elements are revealed one-by-one, and we need to ensure it is covered by a set. Sets from the cover cannot be removed, and the goal is to minimize the total cost of the selected sets. This problem has been studied in two settings. In the first, we have \emph{no} knowledge of the set system. Only upon arrival of an element, we learn which sets contain it. In the second setting, we learn the set system $(X, \S)$ upfront. The subset of elements in $X$ that needs to be covered is revealed in an online fashion. Deterministic algorithms for online set cover are only meaningful in the second setting. In contrast, randomized algorithms against oblivious adversaries also apply in the first setting.

\subsection{Adversarial settings}\label{subsec:adversaries}

An online problem is a game between an online algorithm and an adversary. 
The adversary reveals the input $I_\adv$ to some problem piece by piece, after each reveal, the online algorithm must augment their current solution to maintain feasibility. The cost incurred by the online algorithm is compared to that of the adversary, its ratio is referred to as the \emph{competitive ratio}. This notion does depend on the type of adversary, i.e. how the adversary is allowed to generate the sequence $I_\adv$ and how it has to service it. We give a brief description of three adversarial settings that we consider in this paper, as well as the corresponding notion of competitiveness. For a more detailed exposition, see~\cite{BorodinYaniv98}.

\paragraph*{Adaptive adversaries} An adaptive adversary can create the input sequence $I_\adv$ adaptively. In each time-step, it selects the next request to be presented to the online algorithm based on previous actions of the online algorithm so far. 

\paragraph*{Semi-adaptive adversaries} This adversarial setting was introduced recently in~\cite{BorstEliasVenzin25}. 
A semi-adaptive adversary commits to a super-instance $\hat{I}_\adv$ of the input sequence. In the online phase, it adaptively select $I_\adv \subseteq \hat{I}_\adv$ to be presented to the online algorithm. This adversarial setting is relevant in settings where the competitive ratio depends on the size of $\hat{I}_\adv$, such as online set cover against an adaptive adversary.

\paragraph*{Oblivious adversaries} An oblivious adversary has to commit to the whole input sequence upfront, and cannot change it based on the actions of the online algorithm.

To define the competitive ratio, we further distinguish between \emph{online} and \emph{offline} adversaries. An online adversary needs to service their own request sequence $I_\adv$ in an online fashion, i.e.\,cannot wait to observe the actions of the online algorithm and service their requests accordingly, while an offline adversary can service their request sequence $I_\adv$ after it is finished presenting it to the online algorithm. Denoting by $\mathsf{alg}(I_\adv)$ the cost of the online algorithm and by $c(\adv)$ the cost of the (respective) adversary, the online algorithm is said to be $\gamma$-competitive if
\begin{equation*}
    \mathbb{E}_{\mathsf{alg}}[\mathsf{alg}(I_\adv)] \leq \gamma\cdot\mathbb{E}_{\mathsf{alg}}[c(\adv)].
\end{equation*}
Note that the expectation of the right-hand-side is also over the randomness of the actions, since, whenever the adversary is adaptive, the actions of the online algorithm may influence future requests. If the adversary is oblivious, the right-hand-side holds deterministically, and if the online algorithm is deterministic, both sides hold deterministically. \section{Online prize-collecting set cover}
In this paper, we will make use of the \emph{prize-collecting} variant of the online set cover problem. 

This is essentially the online set cover problem, with a small twist: for each arriving element $e$, there is a penalty $p$ if $e$ is left uncovered. Crucially, elements can reappear (with varying penalties). Formally, such an instance is given by \smash{$(\tilde{X}, \S, \cost: \S \rightarrow \mathbb{R}_{\geq 0}, p: \tilde{X} \times \mathbb{N} \rightarrow \mathbb{R}_{\geq 0})$}. Here, we denote the ground set by $\tilde{X}$ to emphasize the difference to the (multi-)set of eventually arriving elements. We will refer to $\ktild$ as the number of distinct arriving elements.

It is important to note that the naive reduction from (online) set cover, i.e.\,duplicate each element proportional to its number of reappearances and add singleton sets with corresponding penalty cost, does not yield any meaningful competitive ratio. Indeed, such a reduction creates a number of elements which is proportional to the number of arriving elements, which may be considerably higher than $\ktild$. As an example, for the $\nwsf$ problem in the rent-our-buy setting with parameter $M$, this can be of order $M\cdot \tilde{k}$, which would result in a competitive ratio that depends on $M$. 

\subsection{Background on online set cover}\label{subsec:backgroundNMFL}

All approaches for online set cover are based on the following general framework. Throughout the online phase, a monotonically increasing fractional solution is maintained, that is feasible for all arrived elements. In parallel, a \emph{rounding} scheme is run to convert these fractional values to an actual integral solution, i.e.\,actually picking sets covering the arrived elements. For the first part, we rely on the following result.

\begin{theorem}[Theorem 3.2,~\cite{buchbinder_online_2009}]\label{algo:frac_sc}
    There is a deterministic online algorithm that maintains a monotonically increasing fractional solution to the online set cover problem. The algorithm is $O(\log\ell)$-competitive with respect to an optimal fractional solution, where~$\ell := \max_{x\in X}|\{S\in\S | x \in S\}|$ is the maximum number of sets that an element is contained in.
\end{theorem}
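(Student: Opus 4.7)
The plan is to use the primal-dual framework of Alon, Awerbuch, Azar, Buchbinder and Naor. Consider the LP relaxation
\[
    \min\;\sum_{S\in\S} c_S x_S \quad \text{s.t.} \quad \sum_{S\ni e} x_S \geq 1 \ \ \forall\,e\text{ arrived},\ \ x \geq 0,
\]
and its dual $\max \sum_e y_e$ subject to $\sum_{e\in S} y_e \leq c_S$ for every $S$ and $y \geq 0$. The strategy is to maintain, online, monotonically growing $x$ and $y$ satisfying (i)~$x$ is feasible after each arrival, (ii)~$\sum_S c_S x_S \leq O(1)\cdot \sum_e y_e$, and (iii)~$y$ is dual-feasible up to a factor $O(\log\ell)$. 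Weak LP duality then yields the competitive ratio against the optimal fractional solution.

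When an arriving element $e$ is uncovered, I would run the following continuous update until $\sum_{S\ni e} x_S = 1$: raise $y_e$ at unit rate and, simultaneously, raise each $x_S$ with $S\ni e$ according to $\tfrac{d x_S}{d y_e} = (x_S + 1/\ell)/c_S$. Integrating gives the closed form $x_S^{\text{new}} + 1/\ell = (x_S^{\text{old}} + 1/\ell)\,e^{\Delta y_e/c_S}$; the additive seed $1/(\ell c_S)$ ensures $x_S$ can leave $0$ under the multiplicative dynamics. For a polynomial-time implementation the continuous process is discretized into small additive $y_e$-steps, equivalently the multiplicative update $x_S\leftarrow x_S(1+1/c_S) + 1/(\ell c_S)$, losing only constant factors.

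The analysis splits into three short pieces. \emph{Primal growth}: while processing $e$, $\tfrac{d}{dy_e}\sum_S c_S x_S = \sum_{S\ni e}(x_S + 1/\ell) \leq 1 + \ell\cdot(1/\ell) = 2$, using the invariant $\sum_{S\ni e}x_S\leq 1$ throughout and the fact that at most $\ell$ sets contain $e$; summing over arrivals gives $\sum_S c_S x_S \leq 2\sum_e y_e$. \emph{Near dual-feasibility}: iterating the closed form across all $e \in S$ and using $x_S \leq 1$ always (the update halts the instant $e$ is covered) yields $\sum_{e\in S} y_e \leq c_S \ln(\ell+1)$. \emph{Weak duality}: $y/\ln(\ell+1)$ is thus dual-feasible, so $\sum_e y_e \leq \ln(\ell+1)\cdot\opt_{\mathrm{LP}}$, and combining gives $\sum_S c_S x_S \leq O(\log\ell)\cdot\opt_{\mathrm{LP}}$, as claimed.

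The delicate point is calibrating the additive seed $1/(\ell c_S)$: it must be small enough that the total mass $\sum_{S\ni e} 1/\ell \leq 1$ injected per arrival remains $O(1)$ (needed for the primal bound), yet large enough that the exponential dynamics drive $x_S$ from $0$ up to $\Theta(1)$ using only $O(c_S \log \ell)$ cumulative $y$-mass over the lifetime of $S$ (needed for the dual bound). The two constraints meet exactly at the scale $1/\ell$, which is the source of the $O(\log\ell)$ competitive ratio; monotonicity of $x$ and polynomial running time then follow by direct inspection of the discretized update rule.
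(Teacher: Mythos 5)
The paper does not prove this statement; it imports it verbatim as Theorem~3.2 of Buchbinder--Naor's primal--dual survey and uses it as a black box. Your reconstruction is precisely the canonical primal--dual proof from that reference (continuous multiplicative increase of $x_S$ seeded by $1/(\ell c_S)$, primal $\leq 2\cdot$dual, dual scaled by $\ln(\ell+1)$ is feasible, weak duality), and the details check out.
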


Note that this fractional solution is compared to the best fractional solution in hindsight, and can be maintained deterministically. Hence this holds against an offline adversary in any adversarial model. In contrast, the rounding part does depend on the adversarial setting. Against an oblivious or semi-adaptive adversary, the rounding scheme is randomized and the guarantees hold in expectation. However, contrary to the deterministic rounding scheme (against adaptive adversaries), the underlying set system does not need to be revealed upfront. We resume these results in the following two theorems.

\begin{theorem}[Theorem $3$,~\cite{BorstEliasVenzin25}]\label{algo:nmfl_semiadaptive}
    There is a randomized procedure that rounds a monotonically increasing solution to an instance of online set cover. Against a semi-adaptive adversary that commits to a super-instance $\hat{I} = (\hat{X},\mathcal{S}, \cost)$ upfront and adaptively selects $X' \subseteq \hat{X}$ to be presented to the algorithm, we have that:
    \begin{alignat*}{1}
        \mathbb{E}[\cost(\mathsf{alg})] \leq O(\log|\hat{X}|) \cdot \mathbb{E}[\mathsf{val}_{\mathsf{frac}}].
    \end{alignat*}
    Here, $\mathsf{val}_{\mathsf{frac}}$ is the cost of the fractional solution provided to the procedure.
\end{theorem}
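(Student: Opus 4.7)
The plan is to apply the standard random-threshold rounding of Alon--Awerbuch--Azar--Buchbinder--Naor, exploiting crucially that the super-instance $\hat I=(\hat X,\S,\cost)$ is revealed upfront in the semi-adaptive model, so the algorithm can commit to its randomness before seeing any $X'$. At initialization, for each $S\in\S$ the algorithm independently draws $\gamma=c\log|\hat X|$ uniform thresholds $\tau_S^{(1)},\dots,\tau_S^{(\gamma)}\in[0,1]$ for a sufficiently large constant $c$. During the online phase, whenever the monotonically increasing fractional value $x_S$ first crosses $\min_i\tau_S^{(i)}$, the algorithm permanently adds $S$ to its integral cover; monotonicity of $x_S$ makes this a well-defined, monotone, online rounding. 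If a newly arrived element is nevertheless uncovered, the algorithm deterministically adds the cheapest set in $\S$ containing it as a fix-up.

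First I would bound the expected cost of the sampled sets via linearity of expectation. For each $S$ one has $\Pr[S\text{ is sampled}]\le \gamma\cdot x_S$, so conditional on $\mathsf{val}_{\mathsf{frac}}$ the expected sampled cost is at most $\gamma\cdot\mathsf{val}_{\mathsf{frac}}$, and taking outer expectation yields the desired $O(\log|\hat X|)\cdot\mathbb{E}[\mathsf{val}_{\mathsf{frac}}]$ term. Second, for feasibility, I would show that the fix-up is rarely needed: at the arrival time $t_e$ of any $e\in X'$, feasibility of the fractional solution gives $\sum_{S\ni e}x_S(t_e)\ge 1$, and a product-of-probabilities computation yields
\[
\Pr[e\text{ uncovered at time }t_e]\le\prod_{S\ni e}\bigl(1-x_S(t_e)\bigr)^\gamma\le e^{-\gamma}\le |\hat X|^{-2}.
\]
Since $\sum_{S\ni e}x_S(t_e)\ge 1$, the cheapest set covering $e$ has cost at most $\sum_{S\ni e}x_S(t_e)\cdot\cost(S)\le\mathsf{val}_{\mathsf{frac}}$, so summing over the at most $|\hat X|$ possible arrivals gives $\mathbb{E}[\text{fix-up}]\le |\hat X|\cdot|\hat X|^{-2}\cdot\mathbb{E}[\mathsf{val}_{\mathsf{frac}}]=o\bigl(\mathbb{E}[\mathsf{val}_{\mathsf{frac}}]\bigr)$, which is absorbed into the main term.

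The hard part will be justifying the product bound on $\Pr[e\text{ uncovered}]$ against an adversary that adapts to the algorithm's previous actions. Prior decisions leak information about the thresholds --- for instance, the fact that some $\tau_S^{(i)}$ already lies below a past value of $x_S$ --- and the adversary may choose which $e$ to present based on this leakage. I would resolve this by conditioning on the natural filtration $\mathcal F_{t_e}$ at the moment $e$ is presented: given $\mathcal F_{t_e}$, each threshold $\tau_S^{(i)}$ for $S\ni e$ is either already known to have fired --- in which case $S$ is in the cover and $e$ is trivially covered --- or is uniformly distributed on the residual interval $(x_S(t_e),1]$ of length $1-x_S(t_e)$, and independence across the pairs $(S,i)$ is preserved. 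The probability that no threshold associated with a set containing $e$ has fired by time $t_e$ is therefore at most $\prod_{S\ni e}(1-x_S(t_e))^\gamma$, which recovers the desired bound and closes the argument against the adaptive part of the adversary.
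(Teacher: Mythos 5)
The paper does not prove this theorem itself; it is cited from \cite{BorstEliasVenzin25}, so I can only judge your argument on its own terms. Your overall plan (pre-sampled thresholds, monotone rounding, a fix-up set for rare failures) is a sensible instantiation of the standard random-threshold rounding, and you are right to be worried about information leakage to the adaptive part of the adversary. However, the fix-up cost bound has a genuine gap. You upper-bound the fix-up for $e$ by the \emph{random} quantity $\mathsf{val}_{\mathsf{frac}}$ and multiply by $\Pr[e\text{ uncovered}]\le|\hat X|^{-2}$ as if these two were independent. They are not: $c_{\min}(e):=\min_{S\ni e}\cost(S)$ is a \emph{fixed} number that need not be comparable to $\mathbb{E}[\mathsf{val}_{\mathsf{frac}}]$ (the adversary may present cheap elements with probability close to $1$ and a single very expensive $e$ only rarely, so $\mathbb{E}[\mathsf{val}_{\mathsf{frac}}]$ can be tiny relative to $c_{\min}(e)$). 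Written honestly, what you actually have is $\sum_e c_{\min}(e)\Pr[e\text{ uncovered}]$, and there is no a priori relation between $\sum_e c_{\min}(e)$ and $\mathbb{E}[\mathsf{val}_{\mathsf{frac}}]$; conditioning the coverage-failure bound on the event ``$e$ presented'' does not obviously save you either, since after an increment that pushes $\sum_{S\ni e}x_S$ above $1$ the conditional survival probability is $\prod_{S\ni e}\bigl((1-x_S^{\text{new}})/(1-x_S^{\text{old}})\bigr)^\gamma$, which can be close to $1$ if $\sum_S x_S^{\text{old}}$ was already close to $1$. So the clean $e^{-\gamma}$ bound only controls the \emph{unconditional} failure probability, and your product inequality $\Pr[e\text{ uncovered}]\le\prod_{S\ni e}(1-x_S(t_e))^\gamma$ as stated is not the right conditional quantity.

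There is a secondary issue in the sampled-cost bound. Writing $\Pr[S\text{ sampled}]\le\gamma x_S$ treats $x_S$ as a fixed number, but against a (semi-)adaptive adversary $x_S^{\text{final}}$ is a random variable correlated with the thresholds: the adversary can, for example, stop increasing $x_S$ as soon as some threshold fires, in which case $\Pr[S\text{ sampled}]=1$ while $\mathbb{E}[x_S^{\text{final}}]=\mathbb{E}[\min_i\tau_S^{(i)}]=1/(\gamma+1)$, so your stated inequality is violated (the correct statement is an expected inequality with a worse constant, and it is usually obtained by capping the randomized rounding at $x_S<1/2$ and adding $S$ deterministically once $x_S\ge 1/2$). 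Neither objection changes the $O(\log|\hat X|)$ target, but as written neither the sampled-cost step nor the fix-up step is rigorous; a martingale/optional-stopping argument (or a different fix-up mechanism, e.g.\ re-sampling or potential-function charging) is needed to close both gaps.
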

Note that this generalises the result for online set cover against oblivious adversaries. If the adversary is oblivious to the actions of the rounding procedure, their actions may as well be fixed. Hence, we may assume that $X' = \hat{X}$ and that the right-hand-side holds deterministically. In that case, combined with Theorem~\ref{algo:frac_sc}, this recovers the $O(\log k \log \ell)$-competitive ratio against (offline) oblivious adversaries of~\cite{BuchbinderNaor09}, where $k = |X'|$ is the number of arriving elements. Against adaptive adversaries, we rely on the following.

\begin{theorem}[Theorem 5.2, \cite{BuchbinderNaor09}]\label{algo:det_sc}
    There is a deterministic online rounding procedure that rounds a monotonically increasing solution to an instance of online set cover, provided the set system $\hat{I} = (\hat{X},\mathcal{S}, \cost)$ is revealed upfront. The adversary adaptively selects a subset of the elements that will arrive. The algorithm satisfies:
    \begin{alignat*}{1}
        \mathbb{E}[\cost(\mathsf{alg})] \leq O(\log|\hat{X}|) \cdot \mathbb{E}[\mathsf{val}_{\mathsf{frac}}].
    \end{alignat*}
    Here, $\mathsf{val}_{\mathsf{frac}}$ is the cost of the fractional solution provided to the procedure.
\end{theorem}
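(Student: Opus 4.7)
The plan is to derandomize the canonical threshold-rounding scheme for online set cover using the method of conditional expectations; the feature that makes this possible online is precisely the upfront revelation of $(\hat X, \mathcal S, \cost)$, as it lets the pessimistic estimator range over all of $\hat X$ even before elements arrive. I would follow the template of \cite{alon_general_2006, BuchbinderNaor09}.

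First, I scale the input: let $L := \Theta(\log|\hat X|)$ and work with $\bar x_S := \min(1, L\cdot x_S)$, so that $\sum_{S\ni e}\bar x_S \ge L$ holds whenever $e$ has arrived and the fractional cost inflates by at most the factor $L$. As a randomized reference, draw $\tau_S \sim U[0,1]$ independently upfront and integrally include $S$ the first time $\bar x_S \ge \tau_S$. The expected cost at any time is $\sum_S \cost(S)\bar x_S \le L\cdot\mathsf{val}_{\mathsf{frac}}$, and for any arrived element $e$, $\Pr[e\text{ uncovered}] \le \prod_{S\ni e}(1-\bar x_S) \le e^{-L} \le 1/(4|\hat X|)^2$; a union bound on $\hat X$ plus Markov on the cost shows a good threshold vector exists.

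To derandomize this online, I would maintain a pessimistic estimator of the form
\[
\Psi \;=\; \alpha\sum_{e\in\hat X}\prod_{S\ni e}(1-\lambda_S) \;+\; \beta\sum_S\cost(S)\,\lambda_S,
\]
with weights calibrated as $\alpha=\Theta(|\hat X|^2)$ and $\beta=1/(L\cdot\mathsf{val}_{\mathsf{frac}})$. Here $\lambda_S$ is the algorithm's current commitment: $\bar x_S$ if the set is undecided, $1$ if chosen, $0$ if rejected. Each time the input updates some $\bar x_S$, or a commitment is considered, the algorithm picks among ``keep $\lambda_S = \bar x_S$'', ``commit $\lambda_S = 1$'', or ``lock $\lambda_S = 0$'' the move that prevents $\Psi$ from exceeding its target; linearity of $\Psi$ in each $\lambda_S$ guarantees such a move exists. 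Maintaining $\Psi = O(1)$ throughout simultaneously forces feasibility (any arriving $e$ must have some covering $S$ with $\lambda_S=1$, else the feasibility term alone would blow up $\Psi$) and the cost bound $\sum_S\cost(S)\lambda_S = O(L\cdot\mathsf{val}_{\mathsf{frac}}) = O(\log|\hat X|)\cdot\mathsf{val}_{\mathsf{frac}}$.

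The main technical obstacle is designing the estimator and update rule so that the invariant $\Psi = O(1)$ is preserved through all three triggers---an increment of $\bar x_S$, a commitment decision, or the arrival of an element---which boils down to verifying that the multiplicative reduction of each $\prod_{S\ni e}(1-\lambda_S)$ as $\lambda_S$ grows exactly compensates for the additive growth of the cost term. This calibration is the heart of the Chernoff-style derandomization encoded in $\Psi$; once it is in place, the remaining accounting is standard and yields the claimed deterministic $O(\log|\hat X|)$ rounding guarantee.
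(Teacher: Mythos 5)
The paper does not prove this statement; it cites it as Theorem~5.2 of~\cite{BuchbinderNaor09}, so I can only evaluate your sketch against the standard argument in Alon et al.\ and Buchbinder--Naor.

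Your sketch correctly captures the overall template: derandomize threshold rounding by the method of conditional expectations, using a pessimistic estimator that sums over \emph{all} of $\hat X$ (which is exactly why the set system must be revealed upfront). However, there is a genuine gap in your feasibility argument. You claim that maintaining $\Psi = O(1)$ forces feasibility because an uncovered arrived $e$ would ``blow up'' the feasibility term. This does not follow from your calibration: with $\alpha = \Theta(|\hat X|^2)$ and $\prod_{S \ni e}(1-\lambda_S) \le e^{-L} = O(1/|\hat X|^2)$, the feasibility term for an uncovered arrived element is itself only $O(1)$, so $\Psi$ stays bounded even if $e$ remains uncovered. The standard argument handles this explicitly: when $e$ arrives, after updating the fractional values and making conditional-expectation choices, if $e$ is still uncovered one forcibly adds a covering set, and the potential is designed so that the resulting drop in the feasibility term for $e$ pays for the incremental cost. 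Alternatively, one uses element-dependent weights of the form $n^{2w_e}$ that grow each time $e$ is reconsidered, so that an uncovered fully-fractionally-covered $e$ would indeed make the estimator too large. Either way, feasibility is not an automatic consequence of the uniform calibration you gave; it requires an explicit mechanism.

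A smaller issue: ``lock $\lambda_S = 0$'' is not an available move. Since $\bar x_S$ is monotone increasing, a set not yet chosen may still need to be chosen later when its fractional value rises; permanently excluding it would violate the rounding invariant. Each time $\bar x_S$ increases, the only choices are ``keep $\lambda_S = \bar x_S$'' versus ``commit $\lambda_S = 1$.''
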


\subsection{Algorithm for prize-collecting online set cover}
We now give an online algorithm for the prize-collecting variant of online set cover.

We begin by describing an auxiliary instance of set cover $(X, \hat{\S})$. For each reappearance of an element $e\in \tilde{X}$ with penalty $p$, there is an element $(e, p) \in X$. For each set $S \in \S$, there is a set $\hat{S}\in \hat{\S}$ of same weight, so that $\hat{S}$ contains all copies of the elements contained in $S$, i.e.\,$(e,p)\in\hat{S} \implies e\in S$. Finally, for each element corresponding to $(e,p) \in \tilde{X}$, there is a singleton set $\hat{S}_{(e,p)}$ of weight $p$ in $\hat{\S}$. This construction is illustrated in Figure~\ref{fig:aux_instance_sc}. Clearly, any integral (fractional) solution to this auxiliary instance corresponds to an integral (fractional) solution to the original instance of prize-collecting set cover of same cost and vice-versa. 

\begin{figure}[htbp]
    \centering
    \begin{subfigure}[b]{0.45\textwidth}
\begin{tikzpicture}

\def\radius{.45cm}
\node[label = {[below, yshift = -0.7cm]{$e_1$}}, rectangle, inner sep = 2pt, fill] (e1) at (-1.5, 0) {};
\coordinate[label = {[below, yshift = -0.7cm]{$e_2$}}, rectangle, inner sep = 2pt, fill] (e2) at (-0.5, 0) {};
\coordinate[label = {[below, yshift = -0.7cm]{$e_3$}}, rectangle, inner sep = 2pt, fill] (e3) at (0.5, 0) {};
\coordinate[label = {[below, yshift = -0.7cm]{$e_4$}}, rectangle, inner sep = 2pt, fill] (e4) at (1.5, 0) {};
\coordinate[] (s1) at (-0.5, 2) {};
\coordinate[] (s2) at (-1.5, 2) {};
\coordinate[] (s3) at (1.5, 2) {};
\coordinate[] (s4) at (1.5, 2) {};

\coordinate (e1e2) at ($(e1)!\radius!-90:(e2)$);
\coordinate (e2e1) at ($(e2)!\radius!90:(e1)$);
\draw (e1e2) -- (e2e1);
\coordinate (e2s1) at ($(e2)!\radius!-90:(s1)$);
\coordinate (s1e2) at ($(s1)!\radius!90:(e2)$);
\draw (e2s1) -- (s1e2);
\coordinate (s1e1) at ($(s1)!\radius!-90:(e1)$);
\coordinate (e1s1) at ($(e1)!\radius!90:(s1)$);
\draw (s1e1) -- (e1s1);
\pic [draw, angle radius=\radius] {angle=e2e1--e2--e2s1};
\pic [draw, angle radius=\radius] {angle=s1e2--s1--s1e1};
\pic [draw, angle radius=\radius] {angle=e1s1--e1--e1e2};

\coordinate (e4s3) at ($(e4)!\radius!-90:(s3)$);
\coordinate (s3e4) at ($(s3)!\radius!90:(e4)$);
\draw (e4s3) -- (s3e4);
\coordinate (s3s4) at ($(s3)!\radius!-90:(s4)$);
\coordinate (s4s3) at ($(s4)!\radius!90:(s3)$);
\draw (s3s4) -- (s4s3);
\coordinate (s4e4) at ($(s4)!\radius!-90:(e4)$);
\coordinate (e4s4) at ($(e4)!\radius!90:(s4)$);
\draw (s4e4) -- (e4s4);
\pic [draw, angle radius=\radius] {angle=s3e4--s3--s3s4};
\pic [draw, angle radius=\radius] {angle=s4s3--s4--s4e4};
\pic [draw, angle radius=\radius] {angle=e4s4--e4--e4s3};

\def\radius{.35cm}
\coordinate (e2e3) at ($(e2)!\radius!-90:(e3)$);
\coordinate (e3e2) at ($(e3)!\radius!90:(e2)$);
\draw (e2e3) -- (e3e2);
\coordinate (e3e4) at ($(e3)!\radius!-90:(e4)$);
\coordinate (e4e3) at ($(e4)!\radius!90:(e3)$);
\draw (e3e4) -- (e4e3);
\coordinate (e4s2) at ($(e4)!\radius!-90:(s2)$);
\coordinate (s2e4) at ($(s2)!\radius!90:(e4)$);
\draw (e4s2) -- (s2e4);
\coordinate (s2e2) at ($(s2)!\radius!-90:(e2)$);
\coordinate (e2s2) at ($(e2)!\radius!90:(s2)$);
\draw (s2e2) -- (e2s2);
\pic [draw, angle radius=\radius] {angle=e3e2--e3--e3e4};
\pic [draw, angle radius=\radius] {angle=e4e3--e4--e4s2};
\pic [draw, angle radius=\radius] {angle=s2e4--s2--s2e2};
\pic [draw, angle radius=\radius] {angle=e2s2--e2--e2e3};

\end{tikzpicture}
\subcaption{The prize-collecting set cover instance $(\tilde{X}, \S)$}
\end{subfigure}
\hfill
\begin{subfigure}[b]{0.45\textwidth}
        \centering
\begin{tikzpicture}
\coordinate[] (s1) at (-0.5, 2) {};
\coordinate[] (s2) at (-1.5, 2) {};
\coordinate[] (s3) at (1.5, 2) {};
\coordinate[] (s4) at (1.5, 2) {};
 
\def\radius{.45cm}
\node[label = {[below, yshift = -0.7cm]{$e_{1,1}$}}, rectangle, inner sep = 2pt, fill] (e1) at (-2.5, 0) {};
\node[label = {[below, yshift = -0.7cm]{$e_{2,1}$}}, rectangle, inner sep = 2pt, fill] (e2) at (-1.5, 0) {};
\coordinate[label = {[below, yshift = -0.7cm]{$e_{2,2}$}}, rectangle, inner sep = 2pt, fill] (e3) at (-0.5, 0) {};
\coordinate[label = {[below, yshift = -0.7cm]{$e_{3,1}$}}, rectangle, inner sep = 2pt, fill] (e4) at (0.5, 0) {};
\coordinate[label = {[below, yshift = -0.7cm]{$e_{4,1}$}}, rectangle, inner sep = 2pt, fill] (e5) at (1.5, 0) {};
\coordinate[label = {[below, yshift = -0.7cm]{$e_{4,2}$}}, rectangle, inner sep = 2pt, fill] (e6) at (2.5, 0) {};
\coordinate[label = {[below, yshift = -0.7cm]{$e_{4,3}$}}, rectangle, inner sep = 2pt, fill] (e7) at (3.5, 0) {};
\coordinate[] (s1) at (-0.5, 2) {};
\coordinate[] (s2) at (-2.5, 2) {};
\coordinate[] (s3) at (2.5, 2) {};
\coordinate[] (s4) at (1.5, 2) {};

\coordinate (e1e2) at ($(e1)!\radius!-90:(e2)$);
\coordinate (e2e1) at ($(e2)!\radius!90:(e1)$);
\draw (e1e2) -- (e2e1);
\coordinate (e2e3) at ($(e2)!\radius!-90:(e3)$);
\coordinate (e3e2) at ($(e3)!\radius!90:(e2)$);
\draw (e2e3) -- (e3e2);
\coordinate (e3s1) at ($(e3)!\radius!-90:(s1)$);
\coordinate (s1e3) at ($(s1)!\radius!90:(e3)$);
\draw (e3s1) -- (s1e3);
\coordinate (s1e1) at ($(s1)!\radius!-90:(e1)$);
\coordinate (e1s1) at ($(e1)!\radius!90:(s1)$);
\draw (s1e1) -- (e1s1);
\pic [draw, angle radius=\radius] {angle=e2e1--e2--e2e3};
\pic [draw, angle radius=\radius] {angle=e3e2--e3--e3s1};
\pic [draw, angle radius=\radius] {angle=s1e3--s1--s1e1};
\pic [draw, angle radius=\radius] {angle=e1s1--e1--e1e2};

\def\radius{.35cm}
\coordinate (e2e3) at ($(e2)!\radius!-90:(e3)$);
\coordinate (e3e2) at ($(e3)!\radius!90:(e2)$);
\draw (e2e3) -- (e3e2);
\coordinate (e3e4) at ($(e3)!\radius!-90:(e4)$);
\coordinate (e4e3) at ($(e4)!\radius!90:(e3)$);
\draw (e3e4) -- (e4e3);
\coordinate (e4e5) at ($(e4)!\radius!-90:(e5)$);
\coordinate (e5e4) at ($(e5)!\radius!90:(e4)$);
\draw (e4e5) -- (e5e4);
\coordinate (e5e6) at ($(e5)!\radius!-90:(e6)$);
\coordinate (e6e5) at ($(e6)!\radius!90:(e5)$);
\draw (e5e6) -- (e6e5);
\coordinate (e6e7) at ($(e6)!\radius!-90:(e7)$);
\coordinate (e7e6) at ($(e7)!\radius!90:(e6)$);
\draw (e6e7) -- (e7e6);
\coordinate (e7s2) at ($(e7)!\radius!-90:(s2)$);
\coordinate (s2e7) at ($(s2)!\radius!90:(e7)$);
\draw (e7s2) -- (s2e7);
\coordinate (s2e2) at ($(s2)!\radius!-90:(e2)$);
\coordinate (e2s2) at ($(e2)!\radius!90:(s2)$);
\draw (s2e2) -- (e2s2);
\pic [draw, angle radius=\radius] {angle=e3e2--e3--e3e4};
\pic [draw, angle radius=\radius] {angle=e4e3--e4--e4e5};
\pic [draw, angle radius=\radius] {angle=e5e4--e5--e5e6};
\pic [draw, angle radius=\radius] {angle=e6e5--e6--e6e7};
\pic [draw, angle radius=\radius] {angle=e7e6--e7--e7s2};
\pic [draw, angle radius=\radius] {angle=s2e7--s2--s2e2};
\pic [draw, angle radius=\radius] {angle=e2s2--e2--e2e3};

\def\radius{.45cm}
\coordinate (e5e6) at ($(e5)!\radius!-90:(e6)$);
\coordinate (e6e5) at ($(e6)!\radius!90:(e5)$);
\draw (e5e6) -- (e6e5);
\coordinate (e6e7) at ($(e6)!\radius!-90:(e7)$);
\coordinate (e7e6) at ($(e7)!\radius!90:(e6)$);
\draw (e6e7) -- (e7e6);
\coordinate (e7s3) at ($(e7)!\radius!-90:(s3)$);
\coordinate (s3e7) at ($(s3)!\radius!90:(e7)$);
\draw (e7s3) -- (s3e7);
\coordinate (s3e5) at ($(s3)!\radius!-90:(e5)$);
\coordinate (e5s3) at ($(e5)!\radius!90:(s3)$);
\draw (s3e5) -- (e5s3);
\pic [draw, angle radius=\radius] {angle=e6e5--e6--e6e7};
\pic [draw, angle radius=\radius] {angle=e7e6--e7--e7s3};
\pic [draw, angle radius=\radius] {angle=s3e7--s3--s3e5};
\pic [draw, angle radius=\radius] {angle=e5s3--e5--e5e6};

\draw (e1) circle (0.25cm);
\draw (e2) circle (0.25cm);
\draw (e3) circle (0.25cm);
\draw (e4) circle (0.25cm);
\draw (e5) circle (0.25cm);
\draw (e6) circle (0.25cm);
\draw (e7) circle (0.25cm);

\end{tikzpicture}

\subcaption{The auxiliary set cover instance $(X, \hat{\S})$}
\end{subfigure}
\caption{In the prize-collecting instance, element $e_1$, appears once, $e_2$ twice, $e_3$ once, and $e_4$ three times. In the auxiliary instance, each copy is contained in the same sets as the original instance, but is also contained in a singleton set of weight corresponding to its penalty.}
\label{fig:aux_instance_sc}
\end{figure}
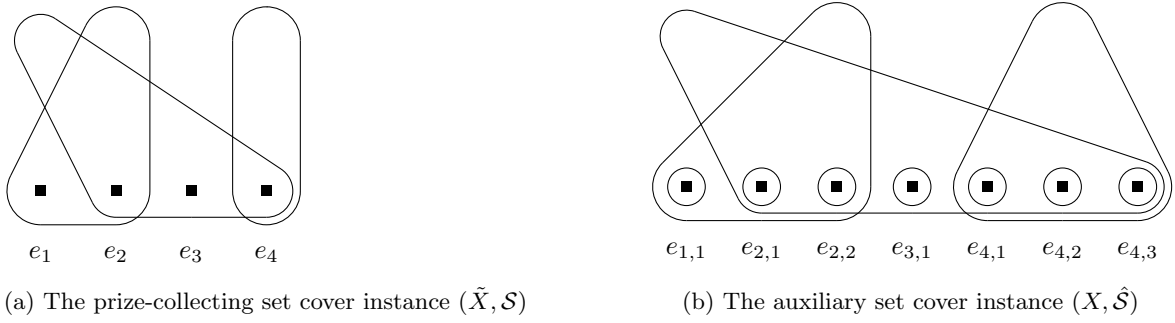
We now describe our main algorithm for online prize-collecting set cover on instance $(\tilde{X}, \S)$. We start by initializing an (empty) auxiliary instance $I_{\text{aux}}$ of online fractional set cover, $(X, \hat{\S})$, for which we will maintain a monotonically increasing fractional solution~\smash{$\{x_{\hat{S}}\}_{\hat{S}\in\hat{\S}}$} by the algorithm of~\cite{BuchbinderNaor09}, see Theorem~\ref{algo:frac_sc}. We also initialize an instance $\tilde{I}$ of online set cover on the same set system as $(\tilde{X}, \S)$. We will maintain monotonically increasing values $\{x_S\}_{S\in\S}$, for which we will use a rounding scheme for online set cover such as~\cite{Alon_et_al_SetCover, BuchbinderNaor09, BorstEliasVenzin25} to decide which sets to select for $(\tilde{X}, \S)$. Specifically, upon the arrival of an element $(e, p) \in \tilde{X}$, we pass a copy of the element to the auxiliary instance of set cover $(X, \hat{\S})$ as detailed above, and we update our fractional solution. If the fractional value of the singleton set containing $(e,p)$ is at least $1/2$, i.e.
\begin{equation*}
    x_{\hat{S}_{(e,p)}} > \tfrac{1}{2},
\end{equation*}
we pay the penalty. In parallel, we set $x_S = 2\cdot x_{\tilde{S}}$, i.e.\,the fractional value of each set $S\in\S$ is twice that of its corresponding set in $\hat{\S}$. We add any sets that the rounding scheme on $(\tilde{X}, \S)$ selects based on $\{x_S\}_{S\in \S}$. 

\begin{theorem}\label{thm:rand_prize_collecting}
    There is a randomized algorithm for online prize-collecting set cover against a semi-adaptive adversary that commits to super-instance $(\tilde{X}, \S, \cost)$, but not on the values of the respective penalties. We have that
    $$\mathbb{E}[c(\mathsf{ALG})] \leq O(\log |\tilde{X}| \log |\S|)\cdot \mathbb{E}[c(\opt_{\mathsf{frac}})],\,\text{and},$$
    $$c(\mathsf{ALG}_{\mathsf{pen}}) \leq O(\log |\S|) \cdot \mathbb{E}[c(\opt_\mathsf{frac})].$$
    Here $c(\mathsf{ALG})$ denotes the total cost incurred by the algorithm, $c(\mathsf{ALG}_{\mathsf{pen}})$ the cost incurred on penalties, and $\opt_{\mathsf{frac}}$ an optimum fractional solution to the sub-instance presented to the algorithm. If the set system $(\tilde{X}, \S, \cost)$ is revealed upfront, the algorithm can be made deterministic and the above competitive ratio holds deterministically.
\end{theorem}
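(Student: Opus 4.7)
The plan is to analyze the algorithm by reducing the prize-collecting instance to the plain auxiliary set cover instance $(X, \hat{\S})$ and then invoking the known online set cover machinery (fractional + rounding) as a black box. First I would verify that the auxiliary instance is no harder fractionally: any fractional prize-collecting solution with set weights $\{y_S\}$ and penalty fractions $\{z_{(e,p)}\}$ satisfying $\sum_{S\ni e} y_S + z_{(e,p)} \geq 1$ translates to a fractional cover of $(X, \hat{\S})$ of identical cost via $y_{\hat{S}}:=y_S$, $y_{\hat{S}_{(e,p)}}:=z_{(e,p)}$, so the auxiliary fractional optimum is upper bounded by $c(\opt_{\mathsf{frac}})$. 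Since every copy $(e,p)\in X$ lies in at most $|\S|+1$ sets of $\hat{\S}$ (those $\hat{S}$ with $e\in S$, plus the singleton), Theorem~\ref{algo:frac_sc} applied to the auxiliary instance yields monotone fractional values $\{x_{\hat{S}}\}$ of total cost $O(\log|\S|)\cdot \mathbb{E}[c(\opt_{\mathsf{frac}})]$.

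Next I would bound the penalty contribution. Whenever the algorithm pays penalty $p$ for arrival $(e,p)$, by construction $x_{\hat{S}_{(e,p)}}>1/2$, so by monotonicity the corresponding singleton contributes at least $p/2$ to the final fractional cost. Hence $c(\mathsf{ALG}_{\mathsf{pen}}) \leq 2\sum_{\text{penalized}} p\cdot x_{\hat{S}_{(e,p)}}$, which is at most twice the total fractional cost of $\{x_{\hat{S}}\}$, and therefore $O(\log|\S|)\cdot \mathbb{E}[c(\opt_{\mathsf{frac}})]$, establishing the second inequality of the theorem.

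For the covering cost I would argue that the scaled values $x_S:=2x_{\hat{S}}$ constitute a monotone feasible fractional solution for the sub-instance of $(\tilde{X},\S,\cost)$ consisting of exactly the non-penalized arrivals: for such an arrival, $x_{\hat{S}_{(e,p)}}\leq 1/2$ and aux-feasibility gives $\sum_{S\ni e}x_{\hat{S}}\geq 1/2$, so $\sum_{S\ni e}x_S\geq 1$. Their fractional cost is only a factor $2$ larger, so feeding them into the rounding procedure of Theorem~\ref{algo:nmfl_semiadaptive} — with super-instance $(\tilde{X},\S,\cost)$ and the non-penalized arrivals as the adaptively selected subset — produces an integral cover of expected cost $O(\log|\tilde{X}|\log|\S|)\cdot \mathbb{E}[c(\opt_{\mathsf{frac}})]$, which is the first inequality. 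The deterministic variant follows identically by substituting Theorem~\ref{algo:det_sc} for Theorem~\ref{algo:nmfl_semiadaptive}, which applies as soon as the super-instance is revealed upfront.

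The main subtlety I expect is checking that the extra adversarial power of choosing the penalties adaptively does not harm any step. This should reduce to three observations I would verify in order: Theorem~\ref{algo:frac_sc} is deterministic and competitive against the offline fractional optimum, so adaptive penalties are immaterial to it; the threshold rule governing whether to pay a penalty is a deterministic function of the current fractional state, so no additional adversarial leakage occurs; and Theorem~\ref{algo:nmfl_semiadaptive} already tolerates a semi-adaptively chosen subset of $\tilde{X}$, so adaptivity filters through unchanged. With these three checks in place, the two claimed bounds — and their deterministic counterparts — follow.
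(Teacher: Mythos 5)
Your proof is correct and follows essentially the same path as the paper's: bound the auxiliary fractional cost via Theorem~\ref{algo:frac_sc} with $\ell \leq |\S|+1$, charge penalties to the singleton sets via the $1/2$-threshold rule, observe that the doubled values $x_S = 2x_{\hat{S}}$ fractionally cover the non-penalized arrivals, and round via Theorem~\ref{algo:nmfl_semiadaptive} (or Theorem~\ref{algo:det_sc} in the deterministic case). The extra checks you flag — that the auxiliary fractional optimum is upper bounded by $c(\opt_{\mathsf{frac}})$ and that adaptive penalties cause no leakage because the fractional algorithm and threshold rule are deterministic — are implicit in the paper's argument and worth spelling out, but they do not constitute a different approach.
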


\begin{proof}
    By Theorem~\ref{algo:frac_sc}, the fractional solution $\{x_{\hat{S}}\}_{\hat{S}\in\hat{\S}}$ for the auxiliary instance $I_{\mathsf{aux}}$ is $O(\log \ell)$ competitive, where $\ell$ is the maximum number of sets any element is contained in. Note that this is deterministic and does not require the assumption that the instance be revealed upfront. By construction, $\ell$ is at most $|\S| + 1$. Since we only pay the penalty if the singleton set has weight at least $1/2$, the total cost incurred on penalties is at most $$O(\log |\S|)\cdot c(\opt_\text{frac}).$$ Hence, whenever we do not pay the penalty for element $(e,p)$, the fractional weight of all sets $\hat{S} \ni (e,p)$ except for the singleton set must sum up to at least $1/2$. Since $x_S = 2\cdot x_{\hat{S}}$, the values $\{x_S\}_{S \in \S}$ fractionally cover $e$. Their fractional cost is $O(\log|\S|)\cdot\opt_{\text{frac}}$. Hence, by Theorem~\ref{algo:nmfl_semiadaptive} resp.~\ref{algo:det_sc}, these values are rounded in an online fashion (ensuring $e$ is covered), incurring an $O(\log |\tilde{X}|)$ multiplicative loss. This concludes the proof.
\end{proof}

To conclude this section, we describe a deterministic, $O(|\S|)$-competitive algorithm for online prize-collecting set cover that is $1$-competitive with respect to penalties. Specifically, we use a \emph{dual} charging scheme to decide when to pay the penalty and when to buy which sets. The dual is given by (D), the primal, e.g.\,the fractional relaxation to $(X, \hat{S})$, by (P).

\begin{minipage}{0.45\textwidth}
\begin{gather}
    \min \quad   \sum_{S \in \hat{\S}} c_{\hat{S}}\cdot x_{\hat{S}} \notag\\  
\begin{align*}
    \rlap{\hspace{-1.2cm} (P)}\text{s.t.} \quad  \sum_{\hat{S}: (e,p) \in \hat{S}} x_{\hat{S}} &\geq 1, \quad \forall (e,p) \in X \\
     x_{\hat{S}} &\geq 0, \quad \forall \hat{S} \in \hat{\S}
\end{align*}
\end{gather}
\end{minipage}
\hfill
\begin{minipage}{0.45\textwidth}
\begin{gather}
    \max \quad \sum_{(e,p) \in X} y_{(e,p)} \notag\\
\begin{align*}
    \rlap{\hspace{-1.2cm} (D)} \text{s.t.} \quad  \sum_{(e,p) \in \hat{S}} y_{(e,p)} &\leq c_{\hat{S}}, \quad \forall \hat{S} \in \hat{\S} \\
     y_{(e,p)} &\geq 0, \quad \forall (e,p) \in X
\end{align*}
\end{gather}
\end{minipage}
\vspace{5mm}

The algorithm is very simple. Whenever an element $e\in \tilde{X}$ with penalty $p$ arrives, we pass $(e,p)$ to the fractional auxiliary instance of set cover described above. We raise the value of the corresponding dual variable $y_{(e,p)}$ until one of the dual constraints (indexed by $\hat{S} \in \hat{\S})$ becomes tight, i.e. 
\begin{equation*}
    \sum_{(e,p) \in \hat{S}} y_{(e,p)} = c_{\hat{S}}.
\end{equation*}
We buy the set in $\S$ corresponding to such a set $\hat{S}$, or pay the penalty if it is the singleton set $\hat{S}_{(e,p)}$. 

\begin{theorem}\label{thm:det_prize_collecting}
    If the set system $(\tilde{X}, \S, \cost)$ is not revealed upfront, there is a deterministic, $O(|\S|)$-competitive algorithm for prize collecting set cover. Furthermore, it holds that
    $$c(\mathsf{ALG}_{\mathsf{pen}}) \leq c(\opt_\mathsf{frac}),$$
    where $c(\mathsf{ALG}_{\mathsf{pen}})$ denotes the cost incurred on penalties, and $c(\opt_\mathsf{frac})$ denotes the optimal fractional cost to sub-instance presented to the online algorithm.
\end{theorem}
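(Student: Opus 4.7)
The plan is a standard primal–dual argument, exploiting that the algorithm explicitly maintains the dual variables $\{y_{(e,p)}\}$ of (D) and only acts when some dual constraint becomes tight. The central claim is that the dual $y$ remains feasible at all times, and each unit of cost charged by the algorithm (either penalty or bought set) is accounted for by the corresponding dual.

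First, I would verify dual feasibility: by the definition of the algorithm, whenever an element $(e,p)$ arrives, $y_{(e,p)}$ is raised continuously until some constraint $\sum_{(e',p') \in \hat{S}} y_{(e',p')} \le c_{\hat{S}}$ hits equality, at which point the raise stops and we act. Since $y$-values are never decreased and only one variable is updated at a time, all constraints remain valid. Consequently, by weak LP duality between (D) and (P),
\begin{equation*}
   \sum_{(e,p) \in X} y_{(e,p)} \;\le\; c(\opt_\mathsf{frac}),
\end{equation*}
where the right-hand side is the fractional optimum of the actual sub-instance presented to the algorithm (equivalently, of its auxiliary translation $(X,\hat{\S})$).

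Next, I would bound the penalty cost. The algorithm pays penalty $p$ for $(e,p)$ only when the singleton constraint $y_{(e,p)} \le c_{\hat{S}_{(e,p)}} = p$ becomes tight, so at that moment the penalty paid equals the value of $y_{(e,p)}$. Summing over all elements for which the penalty is paid and using the feasibility bound gives
\begin{equation*}
   c(\mathsf{ALG}_{\mathsf{pen}}) \;=\; \sum_{(e,p)\,:\,\text{pen.\ paid}} y_{(e,p)} \;\le\; \sum_{(e,p) \in X} y_{(e,p)} \;\le\; c(\opt_\mathsf{frac}),
\end{equation*}
which is the desired first inequality.

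For the set-cost bound, I would observe that whenever a set $S \in \S$ is purchased it is because the non-singleton constraint for $\hat{S}$ became tight, i.e.\ $c_S = \sum_{(e,p) \in \hat{S}} y_{(e,p)}$ at the purchase time. The key subtle point here, and the step most prone to an off-by-one error, is that once $S$ is bought its dual constraint stays tight forever: any subsequent arrival $(e',p')$ with $e' \in S$ cannot raise $y_{(e',p')}$ above $0$ without violating $\hat{S}$'s constraint, so no further dual mass is added to $\hat{S}$, and the identity $c_S = \sum_{(e,p) \in \hat{S}} y_{(e,p)}$ remains valid at the end of the instance. Summing over all bought sets and swapping the order of summation,
\begin{equation*}
   \sum_{S \text{ bought}} c_S \;=\; \sum_{S \text{ bought}} \sum_{(e,p) \in \hat{S}} y_{(e,p)} \;\le\; \sum_{(e,p) \in X} y_{(e,p)} \cdot \bigl|\{S \in \S : e \in S\}\bigr| \;\le\; |\S| \cdot c(\opt_\mathsf{frac}),
\end{equation*}
where we used that each copy $(e,p)$ is contained in at most $|\S|$ non-singleton sets of $\hat{\S}$. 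Combining with the penalty bound yields total cost $O(|\S|) \cdot c(\opt_\mathsf{frac})$. The algorithm is deterministic, requires only knowledge of the sets containing each arriving element (not the full set system upfront), and handles reappearances of elements because each copy $(e,p)$ gets its own dual variable, concluding the proof.
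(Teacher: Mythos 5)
Your proposal is correct and follows essentially the same primal--dual argument as the paper's proof: maintain feasible duals $\{y_{(e,p)}\}$, raise them until a constraint becomes tight, use weak duality to bound $\sum y_{(e,p)} \le c(\opt_{\mathsf{frac}})$, and observe that each $y_{(e,p)}$ can be charged to at most $|\S|+1$ selected sets (the penalty singleton plus at most $|\S|$ ordinary sets). One minor remark: your intermediate observation that the constraint for a bought set $\hat{S}$ ``stays tight forever'' is correct but stronger than necessary — it suffices to note that $c_{\hat{S}} = \sum_{(e,p)\in\hat{S}} y_{(e,p)}$ at the moment $\hat{S}$ is selected and that the $y$-values are nondecreasing, which is what the paper's terser accounting implicitly uses.
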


\begin{proof}
    Clearly, throughout the algorithm, we maintain a feasible dual solution $y_{(e,p)}$ for (D), and each arriving element is covered. To bound the cost, we observe that by weak duality, $(\text{D}) \leq (\text{P})$. Since we only pay the penalty $p$ for element $e$ if \smash{$y_{(e,p)} = p$}, the total amount spent on penalties (i.e.\,the singleton sets in $\hat{\S}$) is at most (D). To bound the total cost, we observe that each $y_{(e,p)}$ contributes its value to at most $|\S| + 1$ sets, meaning that $c(\mathsf{ALG}) \leq (|\S| + 1)\cdot (\text{D}) \leq O(|S|)\cdot \opt_{\mathsf{frac}}.$ 
\end{proof}
 \section{Rent-or-Buy for node-weighted Steiner forest}

In this part we show Theorems~\ref{intro:main_thm_adaptive},~\ref{intro:main_thm_oblivious} and~\ref{intro:main_cor_det}. The algorithm is in Section~\ref{subsec:algo}, the analysis in Section~\ref{subsec:analysis}. 

\subsection{The algorithm}\label{subsec:algo}

Our algorithm relies on an auxiliary instance of online prize-collecting set cover. We describe it in Algorithm~\ref{alg:nwrobsf}. 

\paragraph*{Set-up} We assume that for all arriving terminal pairs, we have that $1 \leq d_{G/A}(s_i, t_i) \leq \ktild^6$,
where $\ktild$ is the number of arriving terminal pairs (set $\ktild := |T|$ for Theorem~\ref{intro:main_thm_adaptive}), and $A$ is the set of bought nodes. This is without loss of generality, see Appendix~\ref{app:distreduction}. This allows us to group pairs $(s_i, t_i)$ into \emph{layers} $j \in L := \{0, 1, \ldots, 6\log\ktild\}$ corresponding to $\lfloor \log d_{G/A}(s_i,t_i) +1\rfloor$. For ease of presentation we transform the node- and edge-weighted graph into a purely node-weighted graph as follows. Each edge with non-zero edge-weight $c_e$ is subdivided by $\lceil 8\cdot 1/c_e + 1\rceil$ nodes, each of node-weight strictly less than $1/8$ and summing exactly to $c_e$. Note that this transformation is only done for the purpose of the description and analysis of the online algorithm, but is not necessary to construct explicitly. 

\paragraph*{An instance of online prize-collecting set cover ($\pcsc$)} We describe the set system $(\tilde{X}, \S)$. For each node $u\in G$, and each $j \in L := \{0, 1, \ldots, 6\log \ktild\}$, there is an element $r_{u,j}$. Also, for each node $v \in G$, there is a set $S_v$ with cost $\log(\ktild)\cdot M\cdot c_v$. We define $R_{u,j}$ to be the family of sets $S_v$ with $v\in \Boundary(u,\Rsc)$ for which $c_v\geq \Csc$. Element $r_{u,j}$ is contained in all sets in $R_{u,j}$ and its penalty is~$2^j$. Whenever clear, we drop the prize-collecting and simply refer to this instance as an instance of set cover ($\SC$).

\paragraph*{Description of the algorithm} We now give a short overview. 
We start by initializing an instance of prize-collecting set cover as described above, $|L|$ witness sets $F_0, F_1, \ldots, F_{|L|}$, as well as a counter $y_{w,j} \leftarrow 0$ for each node $w\in G$. 
Every time a terminal pair $(s, t)$ arrives, we proceed as follows. We compute $d := d_{G/A}(s,t)$, the distance of $s$ to $t$, and we refer to $j := \lfloor \log d +1\rfloor$ as the \emph{layer} the demand pair is on. 
We always rent the greedy path from $s$ to $t$ (hence satisfying the connectivity requirement), however, we will also buy some nodes. 
To decide this, we distinguish between two cases, depending on whether $s$ or $t$ is \emph{uncovered} on layer $j$:
$s$ is said to be uncovered on layer $j$, if no witness set on layer $j$ containing $s$ has been selected, and, for all vertices $v$ in the open ball of radius $\Rinc$ centered at $s$, the counter $y_{v,j}$ is strictly less than $M$; e.g. 
\begin{equation*}
    R_{s,j}\cap F_j = \emptyset, \text{ and }, \forall w \in B(s, \Rinc): y_{w,j} < M.
\end{equation*}
If this holds, we pass $s$ (or $t$, whatever is uncovered) to the auxiliary instance of set cover.

If the instance of set cover adds a new set to the cover, we buy the corresponding node and add it to  $F_j$. For all nodes $w\in B(s, \Rinc)$ we increase the counter $y_{w,j}$ by one. 
If this does not hold, i.e.\,both $s$ and $t$ are covered, we proceed as follows. We pick two nodes $w_s, w_t \in F_j$ on layer $j$, where $w_s$ (resp. $w_t$) lies in the closed ball of radius $\Rfac$ around $s$ (resp. $t$). If this set is empty, we add $s$ (resp. $t$) to the witness set $F_j$ and set $w_s = s$ (resp. $w_t = t$). We then buy the shortest path between $w_s$ and $w_t$ in $G/A$. 

\paragraph*{Size of set system} From the description of the algorithm, it is clear that the only elements $r_{u,j}$ that are released to the instance of $\pcsc$, correspond to some terminal $u$ in the original graph. Hence, the number of arriving elements of $\tilde{X}$ is at most $O(\ktild\cdot \log \ktild)$, and the number of elements that could potentially arrive is at most $O(|T|\cdot \log |T|)$. To see that we can bound the number of sets by $\bar{n}$, note that the nodes corresponding to non-zero edge-weights in the original graph have node-weights strictly less than $1/8$ and do not contain any elements. As such, they are irrelevant and can be disregarded. All remaining sets correspond to one of the $\bar{n}$ original nodes with non-zero node-weight. 

\begin{algorithm}
  \caption{Node-weighted Rent-Or-Buy Steiner Forest \label{alg:nwrobsf}}
\begin{algorithmic}[1]
  \State $F_j\gets \emptyset$ for all $j\in L$
  \State Initialize $\pcsc$, an algorithm for online prize-collecting set cover.
  \When{$(s_i, t_i)$ arrives}

  \State $d\gets$ the distance from $s_i$ to $t_i$ in $G/A$
  \State $j\gets \lfloor \log(d)+1\rfloor$\\
  \If {$\exists\ v\in \{s_i,t_i\}$ that is uncovered on layer $j$}
  \State Pass $r_{v, j}$ to $\pcsc$. \label{line:nmfl}
  \State If $r_{v,j}$ is covered by a newly added set $S_u$, add $u$ to $A$ and all $F_j$ with $c_u\geq \Csc$
  \label{line:add_open_facilities}\State For all $w\in B(v, \Rinc)$, set $y_{v,j}\gets y_{v,j} + 1$.

  \Else
  \State If $F_j\cap \bar{B}(s_i,\Rfac)$ is empty add $s_i$ to $F_j$. \label{line:extend_f_1}
  \State If $F_j\cap \bar{B}(t_i,\Rfac)$ is empty add $t_i$ to $F_j$.\label{line:extend_f_2}
  \State Let $v\in F_j\cap \bar{B}(s_i,\Rfac)$ and $w \in F_j\cap \bar{B}(t_i, \Rfac)$.
  \State Buy $v$ and $w$ and the shortest path between them in $G/A$. \label{line:buy_path_sf}
  \EndIf
  \EndWhen
  \State Rent the shortest path between $s_i$ and $t_i$ in $G/A$. \label{line:rent_rest} \label{line:greedy_path}
\end{algorithmic}
\end{algorithm}

\subsection{The analysis}\label{subsec:analysis}
In this section we prove Theorems~\ref{intro:main_thm_adaptive},~\ref{intro:main_thm_oblivious} and~\ref{intro:main_cor_det}. To do so, we are going to show that the cost of Algorithm~\ref{alg:nwrobsf} can be charged against the optimum cost of the instance of prize-collecting set cover.

We will consider the if-case and the else-case separately.
Let $\cif$ be the cost incurred in iterations that execute the if-case and $\celse$ be the cost incurred in the else-case. 
Let $\CSC$ be the cost of the constructed solution to the prize-collecting set cover instance and let $\CNIcon$ and $\CNIfac$ be the penalty costs and the costs for the added sets in the solution, respectively. 
Note that $\CSC = \CNIcon + \CNIfac$. We rely on the following four claims.

\begin{claim}
    $\celse \leq 6M\sum_{j}2^{j}|F_j|.$\label{claim:bound_celse}
\end{claim}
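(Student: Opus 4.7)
The plan is to decompose the cost of each else-case iteration on layer $j$ into three parts---buying the two chosen endpoints $v, w \in F_j$, buying the shortest $v$-$w$ path in $G/A$, and renting the shortest $s_i$-$t_i$ path in $G/A$---and to bound the aggregate contribution of each by a constant multiple of $M \cdot 2^j |F_j|$.

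For the per-iteration bound, since $j = \lfloor \log d + 1 \rfloor$ with $d = d_{G/A}(s_i, t_i)$, we have $d < 2^j$. The conditions $v \in F_j \cap \bar{B}(s_i, \Rfac)$ and $w \in F_j \cap \bar{B}(t_i, \Rfac)$ together with $\Rfac = 2^{j-3}$ imply $d_G(v, s_i), d_G(w, t_i) \leq 2^{j-3}$. The triangle inequality then yields $d_{G/A}(v, w) \leq d_G(v, s_i) + d_{G/A}(s_i, t_i) + d_G(w, t_i) \leq 2 \cdot 2^{j-3} + 2^j \leq 2 \cdot 2^j$, so the path-buying cost of a single iteration is at most $2 M \cdot 2^j$, and the rent cost is at most $2^j$ plus possibly the weights of unbought terminals.

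For the path-buying cost summed across all else-case iterations on layer $j$, I plan to use a forest argument. Construct an auxiliary graph $G_j$ on vertex set $F_j$, placing an edge $\{v_k, w_k\}$ for each else-case iteration $k$ on layer $j$ whose path-buying cost is positive. Whenever $v$ and $w$ already lie in the same connected component of $G[A]$, a $v$-$w$ path consisting entirely of nodes in $A$ exists with cost $0$ in $G/A$, so any iteration contributing a positive path cost must have $v$ and $w$ in distinct components of $G[A]$, and thereby merges them. If $G_j$ contained a cycle, the chronologically last edge of that cycle would have its two endpoints already connected in $G[A]$ via the earlier edges of the cycle, contradicting positive path cost. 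Hence $G_j$ is a forest with at most $|F_j| - 1$ edges, and the total path-buying cost on layer $j$ is at most $2 M \cdot 2^j |F_j|$.

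For the endpoint-buying and rent costs, each $v \in F_j$ is added to $A$ at most once; if $v$ is added in the if-case its weight is charged to $\cif$ rather than $\celse$, while if $v$ is added in the else-case at lines~11--12 the trigger condition for the else-case, namely that some $w \in B(v, \Rinc)$ satisfies $y_{w,j} \geq M$ together with the emptiness of $F_j \cap \bar{B}(v, \Rfac)$, geometrically bounds $c_v$ by $O(2^j)$. The rent cost of a single iteration is at most $2^j$ plus unbought terminal weights, and can be amortized against $|F_j|$ using the same forest argument. Summing the three contributions yields the claimed bound $6 M \sum_j 2^j |F_j|$. The principal obstacle is the bound $c_v = O(2^j)$ for $v$ newly added to $F_j$ in the else-case, since the algorithm does not explicitly inspect $c_v$ at lines~11--12; establishing it requires a careful case analysis of the else-case entry condition, combining the covering requirement (high counters at nearby vertices) with the empty-ball condition to constrain the admissible weights.
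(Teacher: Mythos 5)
Your forest argument is a correct way to bound the path-buying cost by $2M \cdot 2^j(|F_j|-1)$ per layer, but the proposal has a genuine gap elsewhere. You observe that only else-case iterations where $v$ and $w$ lie in distinct components of $G[A]$ contribute a positive path-buying cost and correspond to edges of $G_j$; however, the rent cost of up to $2^j$ is paid in \emph{every} else-case iteration on layer $j$, including hypothetical ones where $v$ and $w$ are already connected. Such iterations are not edges of $G_j$, so your assertion that the rent cost ``can be amortized against $|F_j|$ using the same forest argument'' does not follow, and you are left with no bound on their number.

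The paper closes this hole with an observation your proposal is missing: in the else-case where neither $s_i$ nor $t_i$ is newly added to $F_j$, the chosen $v$ and $w$ \emph{cannot} already be connected in $G/A$. If they were, the distance from $s_i$ to $t_i$ in $G/A$ would be at most $\Rfac + 0 + \Rfac = 2^{j-2}$, contradicting $d \geq 2^{j-1}$, which is forced by $j = \lfloor \log d + 1 \rfloor$. This single observation shows that every else-case iteration on layer $j$ either grows $F_j$ (at most $|F_j|$ times) or merges two $F_j$-containing components of $A$ (at most $|F_j|$ times), so there are at most $2|F_j|$ such iterations, each of cost at most $3M \cdot 2^j$, which yields the claim directly. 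Once you have this iteration bound, the three-way cost decomposition and the separate treatment of ``iterations with positive path cost'' are unnecessary, and the ``principal obstacle'' concerning $c_v$ dissolves: the content of the proof is the iteration count, not a per-node weight bound, and indeed the paper never establishes (or needs) $c_v = O(2^j)$.
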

\begin{proof}
  The total cost incurred in an iteration in which the else-case is executed can be upper bounded by $3M\cdot 2^{j}$: at most $2M\cdot 2^{j}$ is incurred on \cref{line:buy_path_sf}, and at most $2^{j}$ is incurred on \cref{line:greedy_path}. So we can prove the above claim by simply bounding the number of iterations in which the else-case is executed for each $j$ by $2|F_j|$. Each such iterations falls into one of the following categories:
  \begin{itemize}
    \item If $s_i$ or $t_i$ are added to $F_j$ in Lines~\ref{line:extend_f_1} and~\ref{line:extend_f_2}, then $|F_j|$ increases by at least $1$ in this iteration. Clearly, this can happen at most $|F_j|$ times.
    \item Otherwise, some $v\in F_j\cap \bar{B}(s_i, \Rfac)$ and $w \in F_j\cap \bar{B}(t_i, \Rfac)$ are chosen. Now we claim that no path connecting $v$ and $w$ has been bought so far, i.e.\,$v$ and $w$ are not connected in $G/A$. If this was the case, then the distance $d$ between $s_i$ and $t_i$ in $G/A$ would be at most $\Rfac + \Rfac=2^{j-2}$, contradicting the fact that $j=\lfloor\log(d) +1\rfloor$\rnote{$\Rfacd \leq 2^{j-3}$}. Since $v$ and $w$ become connected in this iteration, the number of components in $A$ that contain a node in $F_j$ decreases by at least~$1$. Since the number of such components is at most $|F_j|$, this can happen at most~$|F_j|$ times.
  \end{itemize}
  Since we the number of iterations for each of the categories is at most $|F_j|$, the total number of iterations in which the else-case is executed for each $j$ is at most $2|F_j|$. This implies the claim.
\end{proof}

\begin{claim}
  $M\sum_{j}2^{j}|F_j|\leq 128(\CNIcon+\frac{\CNIfac}{\log \ktild}).$\label{claim:bound_Fsum}
\end{claim}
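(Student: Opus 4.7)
The plan is to partition $F_j = F_j^{\mathsf{sc}} \cup F_j^{\mathsf{ter}}$ according to how each node was inserted, and charge each class separately. Nodes in $F_j^{\mathsf{sc}}$ enter through Line~\ref{line:add_open_facilities} (namely $\pcsc$ newly bought a set $S_u$ with $c_u\geq \Csc$), while $F_j^{\mathsf{ter}}$ consists of terminals inserted in Lines~\ref{line:extend_f_1}--\ref{line:extend_f_2}. Each $u\in F_j^{\mathsf{sc}}$ can be paid for directly: its associated set has $\pcsc$-cost at least $\log(\ktild)M\cdot \Csc=\log(\ktild)M\cdot 2^{j-6}$, so $M\cdot 2^j \leq 64\cdot \cost(S_u)/\log\ktild$. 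Summing over all such $u$ and $j$ gives $M\sum_j 2^j|F_j^{\mathsf{sc}}|\leq 64\cdot \CNIfac/\log\ktild$.

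The bulk of the work is charging $F_j^{\mathsf{ter}}$. For $u\in F_j^{\mathsf{ter}}$ inserted in iteration $i$ we have $u\in\{s_i,t_i\}$, $u$ is covered on layer $j$, and $F_j\cap \bar B(u,\Rfacd)=\emptyset$. The plan is first to exhibit a witness $w^*\in B(u,\Rincd)$ with $y_{w^*,j}\geq M$. The only alternative coverage (some $u'\in F_j$ with $u'\in \bd B(u,\Rscd)$ and $c_{u'}\geq \Csc$) would require $u'\notin\bar B(u,\Rfacd)$, forcing $c_{u'}>\Rfacd-\Rscd=3\cdot 2^{j-5}$; but such a heavy $u'$ already lies in $F_j^{\mathsf{sc}}$ with set cost exceeding $3\log(\ktild)M\cdot 2^{j-5}$, so an extra $M\cdot 2^j$ charge can be absorbed by the slack in the $F_j^{\mathsf{sc}}$ accounting, provided only $O(1)$ elements of $F_j^{\mathsf{ter}}$ rely on the same heavy $u'$.

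Given the witness $w^*$, the $M$ distinct pass events $r_{v',j}$ that raised $y_{w^*,j}$ to $M$ each sit within $B(u,2\Rincd)$ (via the common $w^*$) and, by the behaviour of $\pcsc$, contribute to the $\pcsc$ cost either a penalty of $2^j$ or a newly bought set of cost at least $\log(\ktild)M\cdot 2^{j-6}$. To charge disjointly across different $u\in F_j^{\mathsf{ter}}$, I will use the rule $F_j\cap \bar B(u',\Rfacd)=\emptyset$ at the insertion time of the later of two members. Combined with the graph subdivision (each subdivision node carries weight $<1/8<\Rfacd/2$ for $j\geq 1$), this yields the separation $d_G(u,u')>\Rfacd=4\Rincd$ for pairs of light terminals, so their witness balls are disjoint; heavy original terminals are folded into the heavy-$u'$ carve-out above. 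Thus at most $M\cdot|F_j^{\mathsf{ter}}|$ pass events on layer $j$ are being charged across $F_j^{\mathsf{ter}}$, and splitting by outcome yields
\begin{equation*}
M\cdot 2^j|F_j^{\mathsf{ter}}|\;\leq\; 2^j\cdot P_j^{\mathrm{pen}} + 2^j\cdot |F_j^{\mathsf{sc}}|\;\leq\; \CNIcon^{(j)}+64\CNIfac^{(j)}/\log\ktild,
\end{equation*}
where $\CNIcon^{(j)}$ and $\CNIfac^{(j)}$ denote the layer-$j$ contributions. Summing over $j$ and adding the $F_j^{\mathsf{sc}}$ bound gives $M\sum_j 2^j|F_j|\leq \CNIcon + 128\CNIfac/\log\ktild\leq 128(\CNIcon+\CNIfac/\log\ktild)$.

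The main obstacle will be making the disjointness step for witness pass events watertight. Heavy original terminals and the $\bd B$-versus-$\bar B$ discrepancy can create overlaps, and the cleanest fix is expected to need either an explicit amortization routing contested charges to unique heavy sets in $F_j^{\mathsf{sc}}$, or a small potential argument that tracks which pass events have already been consumed by earlier $F_j^{\mathsf{ter}}$ insertions in the vicinity.
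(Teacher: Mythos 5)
Your decomposition and charging plan mirror the paper's: split $F_j$ by insertion rule, pay for set-insertions directly from $\CNIfac$, and charge each terminal-insertion to the $M$ iterations that drove some witness counter to $M$, with a separation argument guaranteeing the charged iterations are disjoint. However, the ``heavy $u'$ carve-out'' is a detour caused by a misreading of $\bar B$, and since you leave it unresolved (``the cleanest fix is expected to need\ldots''), the proof as written has a gap. In the paper's notation $\bar B(u,r)=B(u,r)\cup\bd B(u,r)=\{v : d_G(u,v)\le r\}$, the closed ball under the \emph{weight-free} distance, with no $+c_v$ term on the endpoint. Since $2^{j-5}<2^{j-3}$, any $u'\in\bd B(u,\Rsc)$ satisfies $d_G(u,u')\le 2^{j-5}<2^{j-3}$, hence $u'\in\bar B(u,\Rfac)$; the precondition $F_j\cap\bar B(u,\Rfac)=\emptyset$ therefore rules out $R_{u,j}\cap F_j\ne\emptyset$ outright. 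There is no ``heavy $u'$'' case to fold away, no charges to route to $F_j^{\mathsf{sc}}$, and no role for subdivision-node weights; the only remaining source of coverage is a counter witness $w^*\in B(u,\Rinc)$ with $y_{w^*,j}\ge M$, exactly as the paper concludes.

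On disjointness, the separation $d_G(u,u')>\Rfac=4\cdot 2^{j-5}$ (from $u\in F_j$ when $u'$ is inserted with $F_j\cap\bar B(u',\Rfac)=\emptyset$) is the right inequality, but ``disjoint witness balls'' is not quite the statement you need: a single iteration centered at some $v_c$ increments $y_{w,j}$ for \emph{every} $w\in B(v_c,\Rinc)$, so two disjoint balls $B(u,\Rinc)$ and $B(u',\Rinc)$ can both receive increments in the same iteration. The paper closes this with a short triangle chain: if $w_1\in B(u,\Rinc)$ and $w_2\in B(u',\Rinc)$ are both incremented in one iteration then $d_G(w_1,w_2)\le 2\cdot 2^{j-5}$, whence $d_G(u,u')\le d_G(u,w_1)+d_G(w_1,w_2)+d_G(w_2,u')\le 4\cdot 2^{j-5}=\Rfac$, a contradiction. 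This is the same arithmetic you want, aimed at disjoint \emph{iterations} rather than disjoint balls. Finally, the $F_j^{\mathsf{sc}}$ accounting needs the geometric sum: each $u$ enters $F_j^{\mathsf{sc}}$ for \emph{every} $j$ with $c_u\ge 2^{j-6}$, so $\sum_j 2^j\cdot\mathbf{1}[u\in F_j^{\mathsf{sc}}]\le 128 c_u$ rather than $64 c_u$; harmless since the claim is only up to constants, but the ``summing over all such $u$ and $j$ gives $64\CNIfac/\log\ktild$'' step is wrong as stated.
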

\begin{proof}
  Let $F_{j,1}$ be the set of nodes in $F_j$ that were added on Line~\ref{line:add_open_facilities} and let $F_{j,2}$ be the set of nodes in $F_j$ that were added on Lines~\ref{line:extend_f_1} and~\ref{line:extend_f_2}. Note that $F_j = F_{j,1}\cup F_{j,2}$.
  We will show that for each time that $|F_{j,1}|$ or $|F_{j,2}|$ increases by $1$, the right-hand side of the inequality increases by at least $M 2^{j}$. 
  \begin{itemize}
    \item When some new set $S_u$ for $u\in R_{v,j}$ is added to the set cover solution, $|F_{j,1}|$ increases by one for all $j$ with $c_v\geq \Csc$. Hence, the left-hand side of our claimed inequality grows by at most $M \sum_{i=0}^\infty 2^{6-i}\cdot c_v\leq 128M\cdot c_v$, while the right-hand side increases by $128M\cdot c_v$.

    \item When a node $v$ is added to $F_{j,2}$ (on Lines~\ref{line:extend_f_1} or~\ref{line:extend_f_2}), node $v$ can not have been uncovered on layer $j$. This means that either there exists a $z\in R_{v,j}\cap F_j\neq \emptyset$ or there is a $w\in B(v, \Rinc):\ y_{w,j}\geq M$.  However, in the former case, $z\in F_j \cap \bar{B}(v, \Rsc)$, in which case $v$ would not be added to $F_j$. So, there must be some $w\in B(v, \Rinc)$ such that $y_{w,j}\geq M$. Call $y_{w,j}$ a \emph{witness} for $v$.
    
    We will now charge the increase in $|F_{j,2}|$ to all iterations in which $y_{w,j}$ was increased. Note that in each such iteration, either a set of cost at least $\log(k)M\cdot\Csc$ is added to the set cover solution or a penalty of $2^{j}$ is paid. So in $M$ such iterations, the right-hand side of our claimed inequality increases by at least $M\cdot2^{j}$. 
    
    However, we need to be careful that an iteration in which $y_{w,j}$ is increased is not charged by multiple nodes $v$. We will show that this is not the case. 
    Consider an iteration in which $y_{w,j}$ is increased for all $w\in B(v, \Rinc)$. Suppose that $y_{w_1,j}$ and $y_{w_2,j}$ for $w_1,w_2\in B(v, \Rinc)$ are witnesses for $v$ and $v'$ respectively with $v\neq v'$. 
    Assume that $v$ gets added to $F_j$ before $v'$. 
    By the definition of the witnesses, we have $d(w_1, v) \leq \Rinc$ and $d(w_2, v') \leq \Rinc$. However, since $y_{w_1,j}$ and $y_{w_2,j}$ are both increased in the same iteration, we must have $d(w_1, w_2) \leq 2\cdot \Rinc$. 
    This implies that $d(v, v')\leq d(v, w_1) + d(w_1, w_2) + d(w_2, v')  \leq 4\cdot \Rinc=2^{j-3}$, which contradicts the fact that $v'$ is added to $F_j$ even though $v$ is already in $F_j$. Hence, each iteration in which $y_{w,j}$ is increased is charged by at most one node $v$. \rnote{$4\Rincd \leq \Rfacd$}
  \end{itemize}
\end{proof}

\begin{claim}
  $\cif\leq O(\CNIcon+\frac{\CNIfac}{\log \ktild}).$\label{claim:bound_cif}
\end{claim}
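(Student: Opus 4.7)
I will prove the claim by decomposing $\cif$ into the rent cost incurred on Line~\ref{line:rent_rest} and the buy cost incurred on Line~\ref{line:add_open_facilities}, and charging each part to a separate portion of the $\pcsc$ solution. The buy cost is immediate: each node $u$ added to $A$ on Line~\ref{line:add_open_facilities} corresponds to a set $S_u$ newly inserted into the $\pcsc$ cover, and the contribution $\log(\ktild)\cdot M\cdot c_u$ of $S_u$ to $\CNIfac$ is exactly $\log\ktild$ times the actual buy cost $M\cdot c_u$. Summing over all such $u$ yields a total buy cost of at most $\CNIfac/\log\ktild$.

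For the rent cost, fix an if-iteration at layer $j$, in which Line~\ref{line:rent_rest} rents a path of cost at most $d_{G/A}(s_i,t_i)<2^j$. The key observation is that when $v\in\{s_i,t_i\}$ is uncovered on layer $j$ and $r_{v,j}$ is passed to $\pcsc$ on Line~\ref{line:nmfl}, the $\pcsc$ instance must incur fresh cost in this iteration: either it pays the penalty $2^j$, or it adds a new set $S_u$ with $u\in R_{v,j}$, so that $c_u\geq\Csc=2^{j-6}$ and the set contributes at least $\log(\ktild)\cdot M\cdot 2^{j-6}$ to $\CNIfac$. To see that ``silent coverage'' is impossible, suppose some $S_u$ already in the cover contained $r_{v,j}$. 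Then $u\in R_{v,j}$ implies $c_u\geq\Csc$, so at the time $S_u$ was first inserted, Line~\ref{line:add_open_facilities} would have added $u$ to $F_j$, contradicting the uncoveredness condition $R_{v,j}\cap F_j=\emptyset$.

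With this dichotomy established, I charge the rent as follows. Penalty iterations contribute $2^j$ to $\CNIcon$, so the rent equals the penalty contribution. Newly-added-set iterations satisfy $2^j=64\cdot 2^{j-6}\leq \frac{64}{M\log\ktild}\cdot\log(\ktild)\cdot M\cdot c_u$, so the rent is at most an $O(1/(M\log\ktild))$ fraction of the contribution of $S_u$ to $\CNIfac$. Summing over all if-iterations and using $M\geq 1$, the total rent cost is at most $\CNIcon+O(\CNIfac/\log\ktild)$. Adding the buy cost bound gives $\cif\leq O(\CNIcon+\CNIfac/\log\ktild)$, which is the desired inequality.

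The main subtlety is the ``no silent coverage'' step: the algorithm's notion of uncoveredness lives on the graph side (the set $F_j$ and the counters $y_{w,j}$), while the $\pcsc$ notion of coverage lives on the set-cover side. These two notions are kept in sync precisely by Line~\ref{line:add_open_facilities}, which inserts every newly bought node $u$ with $c_u\geq\Csc$ into $F_j$. Without this bookkeeping, a set already in the $\pcsc$ cover could silently cover $r_{v,j}$ without being reflected in $F_j$, and the charging would fail; verifying that the bookkeeping indeed closes this gap is the crux of the proof.
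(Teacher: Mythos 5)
Your proof is correct and takes essentially the same approach as the paper: in each if-iteration you charge the rent cost (at most $2^j$) either to a fresh penalty in $\CNIcon$ or, via $c_u \geq \Csc$, to the new set's contribution to $\CNIfac$, and you charge the buy cost directly to $\CNIfac/\log\ktild$. The one step you make explicit that the paper simply asserts (``$r_{v,j}$ is not covered'') is that silent coverage cannot occur; your justification via Line~\ref{line:add_open_facilities} and the witness sets $F_j$ is the right way to close that small gap.
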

\begin{proof}
  We will show that each iteration the right-hand side increases faster than the left-hand side.
  Whenever the if-case is executed there are two cases:
  \begin{itemize}
    \item For iterations in which new set $S_v$ is added to the set cover solution, $\CNIfac$ increases by at least \smash{$\log \ktild \cdot M \cdot c_v\geq \log \ktild \cdot M \cdot \Csc$}. To buy $v$, a cost of $M\cdot c_v$ is incurred. A cost of at most $d\leq 2^j$ is incurred on Line~\ref{line:rent_rest} for renting the path between $s_i$ and $t_i$. So, $\cif$ increases by at most $M\cdot c_v+2^{j}$, which matches the increase in \smash{$O(\frac{\CNIfac}{\log \ktild})$} on the right-hand-side.
    \item In all other iterations in which the if-case is executed, $r_{v,j}$ is not covered and a penalty of $2^{j}$ is incurred in $\CNIcon$. The only cost incurred in $\cif$ is for renting a path between $s_i$ and $t_i$ on Line~\ref{line:rent_rest} of cost at most $2^j$. 
  \end{itemize}\end{proof}

\begin{claim}
    $\opt_{{\SC}} \leq O(\log \ktild) \cdot \opt_{\mathsf{RoB}}$.\label{claim:bound_opt}
\end{claim}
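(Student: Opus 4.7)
The plan is to exhibit a feasible $\pcsc$ solution of cost $O(\log\ktild)\cdot\opt_{\mathsf{RoB}}$. Let $A^*$ be an optimal RoB bought set and $R_i^*$ the corresponding rental for pair $i$, so that $c(R_i^*)=d_{G/A^*}(s_i,t_i)$ and $\opt_{\mathsf{RoB}}=M\cdot c(A^*)+\sum_i c(R_i^*)$. The candidate SC solution purchases $S_v$ for every $v\in A^*$ with $c_v>0$ and pays the penalty $2^j$ for every uncovered arrival $r_{u,j}$. The purchase part contributes $\log\ktild\cdot M\cdot c(A^*)\leq\log\ktild\cdot\opt_{\mathsf{RoB}}$, so what remains is to bound $\sum_{\text{uncovered}}2^{j}$ by $O(\log\ktild\cdot\opt_{\mathsf{RoB}})$.

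For each uncovered $r_{u,j}$ in iteration $i$, the relations $2^{j-1}\leq d_{G/A_\mathsf{ALG}}(s_i,t_i)\leq d_G(s_i,t_i)$ together with $\Rsc=2^{j-5}$ imply that $u$'s partner lies outside $\bar B(u,\Rsc)$, so any OPT-walk from $u$ must cross $\Boundary(u,\Rsc)$ at some node $v^*$. Uncoveredness excludes $v^*$ from being both heavy ($c_{v^*}\geq\Csc$) and in $A^*$, since otherwise $S_{v^*}\in R_{u,j}$ would cover $r_{u,j}$. If $v^*$ is heavy but not bought, its weight $c_{v^*}\geq\Csc=2^{j-6}$ is paid by OPT's rental, giving $2^j\leq 64\cdot c(R_i^*)$. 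Otherwise $v^*$ is light, so $d_G(u,v^*)>\Rsc-c_{v^*}>2^{j-6}$; the interior of the $u$-to-$v^*$ subwalk then lies inside $B(u,\Rsc)$ and has total $G$-weight exceeding $2^{j-6}$, and splitting this mass between OPT's rented and bought portions yields either $c(R_i^*)\geq 2^{j-7}$ or $c(A^*\cap B(u,\Rsc))\geq 2^{j-7}$. Each uncovered arrival is therefore charged $2^j\leq O(c(R_i^*))$ (rental-type) or $2^j\leq O(c(A^*\cap B(u_i,\Rsc^{(j_i)})))$ (bought-type).

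Rental-type charges sum directly to $O(\sum_i c(R_i^*))\leq O(\opt_{\mathsf{RoB}})$. For bought-type charges, swapping summation gives $\sum_{v\in A^*}c_v\cdot N_v$, where $N_v$ counts the bought-case iterations $i$ with $v\in B(u_i,\Rsc^{(j_i)})$. The pivotal observation is that $\Rsc=\Rinc$ and every bought-case iteration is an if-case execution of Algorithm~\ref{alg:nwrobsf} that on \cref{line:add_open_facilities} increments $y_{v,j_i}$; since the if-case's entry condition requires $y_{v,j}<M$, at most $M$ such iterations occur per node per layer, so $N_v\leq O(M\log\ktild)$ and $\sum_v c_v N_v\leq O(M\log\ktild)\cdot c(A^*)\leq O(\log\ktild\cdot\opt_{\mathsf{RoB}})$. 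I expect this last step, which bridges OPT's structural bought mass to the algorithm's counter invariant via the coincidence $\Rinc=\Rsc$, to be the main obstacle requiring the most care.
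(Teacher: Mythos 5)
Your proof is correct and follows essentially the same approach as the paper: exhibit the feasible $\pcsc$ solution that buys $S_v$ for $v\in A^*$, then charge each uncovered penalty $2^j$ by examining where an OPT walk exits $B(u,\Rsc)$, splitting on whether the exit node is heavy ($c_{v^*}\geq\Csc$, charged to a rental) or light (charged to interior mass), and using the coincidence $\Rinc=\Rsc$ together with the $y_{v,j}<M$ entry condition to cap the number of bought-type charges at $M$ per node per layer. The only difference is bookkeeping: the paper charges each interior node $c_{v_i}/\log\ktild$ uniformly rather than splitting the $2^{j-6}$ interior mass into rented-vs-bought halves and charging the full $2^j$ to whichever dominates, but this does not change the argument's substance.
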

\begin{proof}
Consider $\opt$, an optimal solution to the given Rent-Or-Buy problem, and denote $A$ to be the set of nodes that are bought in $\opt$. We will construct a solution to prize-collecting set cover of cost at most $O(\log \ktild)$ times the cost of this optimal solution. To do so, we let the set cover consist of all sets $s_v$ corresponding to nodes $v$ that are bought in $\opt$. This incurs a set cost of $\sum_{v\in A} M\log(\ktild)c_v\leq O(\log \ktild)\cdot \opt$. It remains to show the same bound on the penalty costs. 

Consider such an arriving element $r_{v,j}$ that is not covered by the set cover. Let $v_1,v_2,\ldots, v_r$ be a sequence of nodes that connect $v=v_1$ to a node in $v_r\in \bd B(v, \Rsc)$ in the optimal Rent-Or-Buy solution. Suppose that $c_{v_r}\geq \Csc$. Then we must have $v_r\notin A$, since otherwise, we would have added $s_{v_r}$ in our set cover solution and $r_{v,j}$ would have been covered. So, renting $v_r$ incurred a cost of at least $\Csc$ in $\opt_\mathsf{RoB}$. So we can charge the penalty of $2^j$ incurred in $\opt_\SC$ to this cost.
    
On the other hand, if $c_{v_r}<\Csc$, then we will charge \smash{$\frac{c_{v_i}}{\log \ktild}$} of the penalty incurred by the arrival of $r_{v,j}$ to the cost paid for buying or renting $v_i$ in the optimal Rent-Or-Buy solution for $i\in \{2, r-1\}$. So, in this case we are charging \smash{$\sum_{i=2}^{r-1}\frac{c_{v_i}}{\log \ktild} \geq \frac{\Rsc -\Csc}{\log \ktild}= \frac{2^{j-6}}{\log \ktild}$}. This is indeed only a factor of $2^6\log \ktild$ less than the connection cost of $r_{v,j}$.\rnote{$\Cscd \leq \frac12 \Rscd$} \rnote{$\Cscd \leq \Rincd$} 

The only thing to check is that the cost for buying or renting a node in the optimal Rent-Or-Buy solution is larger than the charged cost. Nodes that are not bought in the optimal solution have to be rented at cost $c_v$ each time they are used. So this rental cost is clearly at least as large as the charged cost. 

For a node $v$ that is bought in the optimal rent-or-buy solution, observe that $y_{v,j}$ is increased for some $j$ each time that $\frac{c_{v_i}}{\log \ktild}$ is charged to it. Note that $y_{v,j}$ never becomes larger than~$M$. So, the total charged cost for $v$ is at most $M\frac{c_v}{\log \ktild}$ for each $j$. Hence, the total charged cost for $v$ is at most $M\cdot c_v$, which is exactly the cost of buying $v$ in the optimal solution.
\end{proof}

\begin{proof}[Proof of Theorems~\ref{intro:main_thm_adaptive},~\ref{intro:main_thm_oblivious} and~\ref{intro:main_cor_det}]
    By Claims~\ref{claim:bound_celse},~\ref{claim:bound_Fsum} and~\ref{claim:bound_cif}, the cost of incurred by online algorithm is at most
    \smash{$O(\CNIcon+\frac{\CNIfac}{\log \ktild})$}. By Claim~\ref{claim:bound_opt} we know that the optimal solution to the prize-collecting set cover instance is at most $O(\log \ktild)$ times the optimal solution to the Rent-Or-Buy Steiner Forest problem. 

    Plugging in the algorithms for prize-collecting set cover from Theorem~\ref{thm:rand_prize_collecting} in our algorithm, gives us $\CSCpen \leq O(\log|\mathcal{S}|)\opt_{{\SC}} $ and $\CSCsets \leq O(\log|\hat{X}|\log|\mathcal{S}|)\opt_{{\SC}}$. Hence, the cost of the online algorithm is at most:
    \begin{align*}
      \CNIcon+\frac{\CNIfac}{\log \ktild}\leq O\left(\log|\mathcal{S}|+\frac{\log|\hat{X}|\log|\mathcal{S}|}{\log \ktild}\right)\cdot \opt_{{\SC}} \leq O(\log|\mathcal{S}|(\log \ktild + \log|\hat{X}|))\cdot \opt_\mathsf{RoB}.
    \end{align*} 
    Since $|\tilde{X}| = O(\ktild\log\ktild)$ (recall that we set $\ktild := |T|$ when we instantiate Algorithm~\ref{alg:nwrobsf} with the deterministic algorithm from Theorem~\ref{thm:rand_prize_collecting}) and $|\S| = \bar{n}$, Theorems~\ref{intro:main_thm_adaptive} and~\ref{intro:main_thm_oblivious} follow. 
    Similarly, plugging in the deterministic algorithm from \cref{thm:det_prize_collecting} gives us a competitive ratio of $O(\bar{n}\log \ktild)$. This proves \cref{intro:main_cor_det}.
\end{proof}

\bibliographystyle{alpha}
\bibliography{ref}
\appendix
\section{Distances are polynomially related}\label{app:distreduction}

\begin{algorithm}
  \caption{Greedy\label{alg:offgreedy}}
\begin{algorithmic}[1]
  \State \textbf{Input:} Graph $G = (V, E)$, terminal pairs $(s_1, t_1), \ldots, (s_k, t_k)$.
  
\State \textbf{Output:} $\tilde{k}$-approximation to $\nwsf$ for rent-or-buy.
  \State Group terminals into $\tilde{k}$ distinct terminal pairs $(s_1, t_1), \ldots, (s_{\tilde{k}} , t_{\tilde{k}})$ and multiplicities $m_1, \ldots, m_{\tilde{k}}$.
  \For{$i \in \{1, 2, \ldots, \tilde{k}\}$}
  \State Compute shortest (cheapest) path between $s_i$ and $t_i$, $\mathbf{path} := \{s_i, v_{i_1}, \ldots, v_{i_\ell}, t_i\}$.
  \If{$m_i < p$}
  \State Rent $\mathbf{path}$.
  
  \Else
  \State Buy $\mathbf{path}$.
  \EndIf
  \EndFor
\end{algorithmic}
\end{algorithm}

Throughout this part, we make the assumption that $\tilde{k}$, the number of eventually arriving \emph{distinct} terminal pairs, is known (in the deterministic setting, just set it to $|T|$). This is without loss of generality and follows from a standard doubling argument, see for instance Appendix A of~\cite{BorstEliasVenzin25}. 

We now show as to why we can assume that all distances are between $1$ and $\tilde{k}^6$.

Throughout the algorithm, we maintain a \emph{guess} on the optimum value of an optimum value, $\mathsf{Guess}_\opt$, that we periodically update. Throughout (a run of) the algorithm, we maintain the property that
\begin{alignat}{1}\tag{P}
    \opt \leq \guess \leq \tilde{k}^2\cdot\opt.\label{property:guess}
\end{alignat}
Here, we slightly abuse notation and denote $\opt$ to be the optimum value of terminal pairs $(s_1, t_1)$ up to $(s_\ell, t_\ell)$, where $(s_{\ell+1}, t_{\ell+1})$ is the (next) pair that causes to update our guess. Specifically, if $(s_{\ell+1}, t_{\ell+1})$ is the first arriving terminal pair such that 
$$\greedy((s_1, t_1), \ldots, (s_{\ell+1}, t_{\ell+1})) > \guess,$$
we set 
$$\guess \leftarrow \tilde{k}\cdot\greedy((s_1, t_1), \ldots, (s_{\ell+1}, t_{\ell+1})),$$ 
and we rerun the algorithm from scratch - i.e.\,we forget all paths we already bought and we rerun the algorithm (in an online fashion) starting from terminal pair $(s_1, t_1)$.

During each run of the algorithm (e.g.\,for each value of $\guess$) we proceed as follows. For each arriving terminal pair $(s_i, t_i)$, we compute $d := d_{G/A}(s_i, t_i)$, where $A$ is the set of already bought nodes. Only if $\guess / (\tilde{k}^3 \cdot p) \leq d \leq 2\cdot \tilde{k}^2 \cdot \guess / p$, we pass the pair to Algorithm~\ref{alg:nwrobsf}. Observe that, up to scaling, all distances are polynomially related in $\tilde{k}$. Whenever $d < \guess / (\tilde{k}^4 \cdot p)$, we simply buy the cheapest path, and, whenever it exceeds $2\cdot\tilde{k}^2 \cdot \guess / p$ we rent the cheapest path. We will show that in both cases, we can charge the cost directly to $\opt$. We resume the procedure in the Algorithm~\ref{alg:dist_poly_related}.

\begin{algorithm}
  \caption{Distances are polynomially related\label{alg:procedure_polyk}}
\begin{algorithmic}[1]
  \State \textbf{Input:} Graph $G = (V, E)$. Terminal pairs $(s_1, t_1), \ldots, (s_k, t_k)$ arriving one-by-one.
  \State \textbf{Assumption:} $\tilde{k}$ (an approximation to) the number of distinct arriving terminal pairs.
  \State Initialize $\guess \leftarrow 0.$
  \State Initialize $\nwsf$, an instance of Algorithm~\ref{alg:nwrobsf}.
  \When{$(s_i, t_i)$ arrives}
  \If{$\guess < \tilde{k}\cdot \greedy((s_1, t_1), \ldots (s_i, t_i))$}
  \State $\guess \leftarrow \tilde{k}\cdot \greedy((s_1, t_1), \ldots (s_i, t_i))$.
  \State Reinitialize $\nwsf$.
  \EndIf
  \If{$d_G(s_i, t_i) < \guess / (p\cdot \tilde{k}^3)$}
  \State Buy the cheapest path from $s_i$ to $t_i$.~\label{line:k_buy}
  \ElsIf{$d_G(s_i, t_i) > 2\cdot\tilde{k}^3\cdot\guess / p$}
  \State Rent the cheapest path from $s_i$ to $t_i$.~\label{line:k_rent}
  \Else
  \State Pass $(s_i, t_i)$ to $\nwsf$.\label{line:pass_to_nwsf}
  \EndIf
  \EndWhen
\end{algorithmic}\label{alg:dist_poly_related}
\end{algorithm}

We now show that this procedure is correct.

\begin{claim}
    $\greedy$ is a $\tilde{k}$-approximation and can be assumed to be monotonically increasing.
\end{claim}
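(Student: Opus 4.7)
The plan is to prove the two parts of the claim separately, starting with the approximation factor and then handling monotonicity.

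For the approximation ratio, I would first compute the greedy cost explicitly. Grouping requests into $\tilde{k}$ distinct pairs $(s_i,t_i)$ with multiplicities $m_i$, the greedy contribution for pair $i$ is $\min(m_i, M)\cdot d_i$, where $d_i$ is the cost of a cheapest $s_i$–$t_i$ path in $G$ with respect to $c$: the algorithm either rents that path $m_i$ times (when $m_i < M$), costing $m_i d_i$, or buys it once, costing $M d_i$. Summing, $\greedy = \sum_{i=1}^{\tilde{k}} \min(m_i, M)\cdot d_i$.

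Next I would establish the per-pair lower bound $\opt \geq \min(m_i, M)\cdot d_i$ for each $i$. Fix an optimum solution and let $A$ be its set of bought nodes; then for each pair $i$ there is a path $P_i$ from $s_i$ to $t_i$ in the subgraph used by $\opt$ to service pair $i$. Letting $b_i$ and $r_i$ denote the total $c$-weight of nodes of $P_i$ inside and outside $A$ respectively, we have $b_i + r_i \geq d_i$ since $d_i$ is the shortest path cost. The cost attributable to $P_i$ in $\opt$ is at least $M\cdot b_i + m_i\cdot r_i \geq \min(M, m_i)(b_i + r_i) \geq \min(M, m_i)\cdot d_i$. (The bought portion $M\cdot b_i$ is still a valid lower bound on $\opt$ even when bought nodes are shared across pairs, since we use it only for a single pair.) Combining, $\greedy = \sum_i \min(m_i, M)\cdot d_i \leq \tilde{k}\cdot \opt$.

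For monotonicity, I would observe that $\greedy$, viewed as a function of the prefix length, is naturally non-decreasing. When request $(s_{j+1}, t_{j+1})$ arrives, either it introduces a new distinct pair (adding $\min(1,M)\cdot d \geq 0$ to the greedy cost) or it increments the multiplicity of an existing pair from $m$ to $m+1$, changing that pair's contribution from $\min(m,M)\cdot d$ to $\min(m+1, M)\cdot d$, a non-negative change. If strict monotonicity were ever needed, one could simply replace the greedy value on a prefix by the running maximum of greedy values over all prefixes seen so far; since the greedy itself is at most $\tilde{k}\cdot \opt$ and $\opt$ is monotone in the request sequence, this preserves the $\tilde{k}$-approximation.

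The main conceptual point (rather than obstacle) is the per-pair lower bound on $\opt$: once one is careful that $M\cdot b_i$ is a legitimate charge against $\opt$ even when bought nodes are reused by other pairs, the rest is straightforward bookkeeping.
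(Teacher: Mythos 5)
Your proof is correct and takes essentially the same route as the paper: both reduce to the per-pair bound $\opt \geq \min(m_i, M)\, d_i$ (the paper phrases this as $\opt \geq \opt((s_i,t_i)^{m_i})$ plus the observation that greedy is optimal on single-pair instances, while you derive the same bound from scratch via a bought/rented split), and both then sum over the $\tilde{k}$ pairs; your monotonicity argument via the closed-form cost $\sum_i \min(m_i,M)\,d_i$ is precisely what the paper's ``duplicate vertices'' remark is pointing at. One small imprecision: for the bound $M b_i + m_i r_i \leq \opt$ you need $P_i$ to be a \emph{shortest} $s_i$-$t_i$ path in $G/A$ (so that $r_i = d_{G/A}(s_i,t_i)$ lower-bounds the rental cost of every one of the $m_i$ arrivals), rather than ``a path used by $\opt$ to service pair $i$,'' since $\opt$ may rent a different, and cheaper, path on a later arrival; with that choice of $P_i$ the inequalities $b_i + r_i \geq d_i$ and $M b_i + m_i r_i \leq \opt$ both go through as you intend.
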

\begin{proof}
    For any $\tilde{k}$ distinct terminal pairs $(s_1, t_1), \ldots, (s_{\tilde{k}}, t_{\tilde{k}})$ with respective multiplicities $m_1, \ldots, m_{\tilde{k}}$, 
    $$\opt \geq \max_{i\in [\tilde{k}]}\opt( \overbrace{(s_i, t_i), \ldots, (s_i, t_i)}^{m_i}).$$
    Since the greedy approach is optimal when all terminal pairs are equal, $\greedy$ is a $\tilde{k}$-approximation. Note that for this $\tilde{k}$-approximation we do not require that any terminal pair share (bought) vertices, hence, by duplicating vertices, we may assume that the cost of the solution is monotonically increasing.
    
\end{proof}

\begin{claim}\label{claim:property_is_maintained}
    Property~\ref{property:guess} is maintained throughout.
\end{claim}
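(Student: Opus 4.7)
The plan is to prove both directions of Property~\ref{property:guess} by analyzing a single phase, i.e.\,a maximal time interval during which $\guess$ does not change. Fix such a phase, let $\ell_0+1$ be the pair whose arrival caused the update that set the current value of $\guess$, and let $\ell+1$ be the pair whose arrival will trigger the next update (so that $\opt$ in the statement refers to $\opt_\ell$, the optimum value on the first $\ell$ pairs). By the update rule, $\guess = \tilde{k}\cdot\greedy((s_1,t_1),\ldots,(s_{\ell_0+1},t_{\ell_0+1}))$.

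For the lower bound $\opt_\ell \leq \guess$, I would argue as follows. Because $(s_{\ell+1},t_{\ell+1})$ is, by definition, the \emph{first} pair after the current update for which $\greedy$ strictly exceeds $\guess$, we have $\greedy((s_1,t_1),\ldots,(s_\ell,t_\ell)) \leq \guess$. Since the greedy solution is feasible for the first $\ell$ pairs, $\opt_\ell \leq \greedy((s_1,t_1),\ldots,(s_\ell,t_\ell)) \leq \guess$.

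For the upper bound $\guess \leq \tilde{k}^2\cdot \opt_\ell$, I would combine two ingredients. First, by the previous claim (greedy is a $\tilde{k}$-approximation), $\greedy((s_1,t_1),\ldots,(s_{\ell_0+1},t_{\ell_0+1})) \leq \tilde{k}\cdot \opt_{\ell_0+1}$. Second, the optimum Rent-or-Buy cost is monotonically non-decreasing in the set of terminal pairs that must be served, hence $\opt_{\ell_0+1}\leq \opt_\ell$. Combining both, $\guess = \tilde{k}\cdot \greedy_{\ell_0+1} \leq \tilde{k}^2\cdot \opt_{\ell_0+1} \leq \tilde{k}^2 \cdot \opt_\ell$. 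The very first phase (with $\guess = 0$ before any pair has arrived) is handled trivially, as the property holds vacuously with $\opt = 0$.

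I do not expect a real obstacle here: the claim is essentially a bookkeeping statement whose proof combines the approximation guarantee of greedy with its monotonicity (from the immediately preceding claim) and the obvious monotonicity of the offline optimum. The only subtle point is to correctly interpret the abuse of notation: $\opt$ refers to the optimum on the pairs seen in the \emph{current} phase up to but not including the triggering pair, and the upper bound needs monotonicity of $\opt$ across the whole phase, while the lower bound needs the definition of ``first pair to exceed $\guess$'' for greedy.
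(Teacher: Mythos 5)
Your proof is correct and follows essentially the same two-step argument as the paper: the upper bound $\guess \leq \tilde{k}^2\cdot\opt$ via the $\tilde{k}$-approximation guarantee of $\greedy$ at the update time combined with monotonicity of $\opt$, and the lower bound $\opt\leq\guess$ from the fact that $\greedy$ (which upper-bounds $\opt$) does not exceed $\guess$ until the next triggering pair arrives.
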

\begin{proof}
Let $\ell_1+1, \ell_2+1 \in \mathbb{N}$ two consecutive times where we update $\guess$, denote by $\guess$ the value used during this interval, and denote by $\opt_{\leq i}$ the optimum value to $\nwsf$ for terminal pairs $(s_1, t_1), \ldots, (s_{i}, t_{i})$. 

Since the greedy algorithm is a $\tilde{k}$-approximation and our update rule, at time $\ell_1+1$, it holds that 
$$\guess \leq \tilde{k}^2\cdot \opt_{\ell_1+1}.$$ 
Since $\opt$ can only increase, this shows the right-hand-side of the inequality.

On the other hand, since we do not update until time $\ell_2+1$, it must hold that 
$$\opt_{\leq \ell_2} \leq \greedy(\{(s_i, t_i)_{i\leq \ell_2}) < \guess.$$
\end{proof}

\begin{claim}\label{claim:double_every_three}
    After three updates of $\guess$, $\opt$ has increased by at least a factor $\tilde{k}$.
\end{claim}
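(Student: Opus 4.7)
My plan is to iterate Property~(\ref{property:guess}) and the update rule $\guess \leftarrow \tilde{k} \cdot \greedy$ across three consecutive updates, sharpening the Property~(\ref{property:guess}) bounds at update times using the $\tilde{k}$-approximation of greedy. Let $\ell_1 < \ell_2 < \ell_3 < \ell_4$ be four consecutive update times, with corresponding guess values $\guess^{(1)} < \guess^{(2)} < \guess^{(3)} < \guess^{(4)}$. The goal is to conclude that $\opt_{\ell_4} \geq \tilde{k} \cdot \opt_{\ell_1}$.

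The first ingredient I would record is a sharper lower bound on $\guess$ \emph{just after} each update: combining $\guess^{(j)} = \tilde{k}\cdot\greedy_{\ell_j}$ with the feasibility inequality $\greedy_{\ell_j} \geq \opt_{\ell_j}$ gives $\guess^{(j)} \geq \tilde{k}\cdot \opt_{\ell_j}$, which is a factor $\tilde{k}$ tighter than Property~(\ref{property:guess}). The second ingredient is a lower bound on $\opt_{\ell_{j+1}}$ at the next trigger: the update condition $\tilde{k}\cdot\greedy_{\ell_{j+1}} > \guess^{(j)}$ together with the approximation guarantee $\greedy_{\ell_{j+1}} \leq \tilde{k}\cdot\opt_{\ell_{j+1}}$ yields $\opt_{\ell_{j+1}} > \guess^{(j)}/\tilde{k}^2$. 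Substituting the sharper lower bound on $\guess^{(j)}$ into this then produces a per-step relation between $\opt_{\ell_{j+1}}$ and $\opt_{\ell_j}$ that, when chained three times, should close the desired factor of $\tilde{k}$.

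The main obstacle I anticipate is that per single update the update rule only forces \emph{strict} increase of $\guess$, not a multiplicative one, so the chain-of-inequalities approach risks losing a factor $\tilde{k}$ at each step rather than gaining one. To circumvent this I plan to argue by contradiction: suppose $\opt_{\ell_4} < \tilde{k}\cdot\opt_{\ell_1}$. Then the upper side of Property~(\ref{property:guess}) together with monotonicity of $\opt$ forces $\guess^{(4)} \leq \tilde{k}^2\cdot \opt_{\ell_4} < \tilde{k}^3\cdot \opt_{\ell_1}$, whereas $\guess^{(1)} \geq \tilde{k}\cdot \opt_{\ell_1}$ from the sharpened lower bound. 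The delicate remaining step is to show that the compounded effect of three trigger conditions $\guess^{(j)} < \tilde{k}\cdot\greedy_{\ell_{j+1}}$ together with the sharpened lower bounds at each intermediate update is to scale $\guess$ by at least $\tilde{k}^3$ over the three steps, which would contradict the previous display. I expect this scaling to fall out by successively substituting $\guess^{(j+1)} \geq \tilde{k}\cdot\opt_{\ell_{j+1}} > \guess^{(j)}/\tilde{k}$ and using the freshly restored slack of Property~(\ref{property:guess}) at each update; if this clean chain fails, a slightly less crisp statement (factor $\tilde{k}$ after $O(1)$ updates) would already suffice for the downstream cost bookkeeping.
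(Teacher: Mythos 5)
There is a genuine gap, and it stems from which form of the trigger condition you use. The prose in Appendix~\ref{app:distreduction} says $\guess$ is updated as soon as $\greedy(\cdot) > \guess$, and then sets $\guess \leftarrow \tilde{k}\cdot\greedy(\cdot)$. The pseudocode in Algorithm~\ref{alg:procedure_polyk} instead triggers on $\guess < \tilde{k}\cdot\greedy(\cdot)$; you adopted the pseudocode version. These two are not equivalent, and only the text's version supports the claim: under the text's trigger, at the moment of update $j+1$ we have $\greedy_{\ell_{j+1}} > \guess^{(j)}$, so
\begin{equation*}
  \guess^{(j+1)} = \tilde{k}\cdot\greedy_{\ell_{j+1}} > \tilde{k}\cdot\guess^{(j)},
\end{equation*}
i.e.\ every update multiplies $\guess$ by more than $\tilde{k}$. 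That is the one-line observation the paper's proof hinges on, and combined with Property~(\ref{property:guess}) at the two endpoints it gives
\begin{equation*}
  \tilde{k}^{2}\cdot\opt_{\ell_4} \;\geq\; \guess^{(4)} \;>\; \tilde{k}^{3}\cdot\guess^{(1)} \;\geq\; \tilde{k}^{3}\cdot\opt_{\ell_1},
\end{equation*}
hence $\opt_{\ell_4} > \tilde{k}\cdot\opt_{\ell_1}$.

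Under the pseudocode's trigger $\guess^{(j)} < \tilde{k}\cdot\greedy_{\ell_{j+1}}$, no such multiplicative growth of $\guess$ is forced: you only get $\guess^{(j+1)} > \guess^{(j)}$, and an adversary could make $\greedy$ creep up by an arbitrarily small amount each time, so the claim is simply false in that model. This is why your per-step chaining loses a factor $\tilde{k}$ per update, and it is also why the contradiction you sketch cannot close: the ``delicate remaining step'' you defer --- that three triggers force $\guess$ to scale by $\tilde{k}^3$ --- is exactly the false statement under the pseudocode condition, and the weaker fallback ($O(1)$ updates instead of three) fails for the same reason. Your sharpened lower bound $\guess^{(j)}\geq \tilde{k}\cdot\opt_{\ell_j}$ is correct but is not what drives the argument; the multiplicative growth of $\guess$ between consecutive updates is. Switch to the trigger $\greedy > \guess$ and the proof collapses to the two-line argument above; the chain-in-$\opt$-space / contradiction scaffolding is unnecessary.
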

\begin{proof}
    Note that every time we increase $\guess$, we increase it by at least a factor $\tilde{k}$. Since by property~$(\ref{property:guess})$, $\guess$ is between $\opt$ and $\tilde{k}^2\cdot \opt$, this means that $\opt$ must have increased by a factor~$\tilde{k}$.
\end{proof}

\begin{claim}\label{claim:extra_cost}
    For each value of $\guess$, the cost incurred in Lines~\ref{line:k_buy} and~\ref{line:k_rent} of Algorithm~\ref{alg:procedure_polyk} is at most $\guess/\tilde{k}^2$.
\end{claim}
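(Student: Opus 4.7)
The plan is to bound the contributions of Line~\ref{line:k_buy} (buying when $d_G(s_i,t_i)<\guess/(p\tilde{k}^3)$) and Line~\ref{line:k_rent} (renting when $d_G(s_i,t_i)>2\tilde{k}^3\guess/p$) separately over a single phase in which $\guess$ remains fixed, invoking property~\ref{property:guess}, namely $\opt\leq\guess\leq\tilde{k}^2\opt$, as needed.

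For Line~\ref{line:k_buy}, each execution has $d_G(s_i,t_i)<\guess/(p\tilde{k}^3)$, and the cost of buying the cheapest $s_i$--$t_i$ path is at most $p$ times the total weight of its yet-unbought nodes, i.e.\ at most $p\cdot d_{G/A}(s_i,t_i)\leq p\cdot d_G(s_i,t_i)<\guess/\tilde{k}^3$. Crucially, once we have bought such a path, $s_i$ and $t_i$ lie in the same connected component of $G[A]$, so every subsequent arrival of this same distinct pair satisfies $d_{G/A}(s_i,t_i)=0$ and contributes zero additional cost. Since there are at most $\tilde{k}$ distinct terminal pairs, the total Line~\ref{line:k_buy} cost is bounded by $\tilde{k}\cdot \guess/\tilde{k}^3 = \guess/\tilde{k}^2$, as required.

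For Line~\ref{line:k_rent}, the plan is to charge the rental cost directly against $\opt$. Any feasible solution must connect $s_i$ to $t_i$ at cost at least $d_G(s_i,t_i)$, so $d_G(s_i,t_i)\leq \opt\leq \guess$ by property~\ref{property:guess}. Combined with the triggering condition $d_G(s_i,t_i)>2\tilde{k}^3\guess/p$, this forces $p>2\tilde{k}^3$, so the alternative of buying the corresponding path would cost at least $p\cdot d_G(s_i,t_i)>2\tilde{k}^3\guess$, far exceeding $\opt$; hence $\opt$ itself must rent the pair on every arrival. Our per-arrival rental cost is therefore matched arrival-for-arrival by $\opt$, and using the fact that within a single phase the greedy cost stays below $\guess$ (otherwise the phase would end) together with the $\tilde{k}$-approximation guarantee of $\greedy$ lets us upgrade this per-arrival bound to the aggregate bound $\guess/\tilde{k}^2$.

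The main obstacle is the Line~\ref{line:k_rent} case: the naive per-arrival bound $d_G(s_i,t_i)\leq \guess$ gives only the too-weak estimate $\guess$, so the argument must really leverage both directions of property~\ref{property:guess} together with the update rule of $\guess$ to shave off a $\tilde{k}^{-2}$ factor. Making this amortization precise, and checking the constants $p>2\tilde{k}^3$, $\tilde{k}^2$ and $\tilde{k}^3$ interact correctly, is the only delicate step; the buy branch is routine once one notices that repeated arrivals of an already-bought pair are absorbed by $A$.
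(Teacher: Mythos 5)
Your Line~\ref{line:k_buy} argument matches the paper's: at most $\tilde{k}$ distinct pairs, each bought path costs less than $p\cdot d_{G/A}(s_i,t_i) < \guess/\tilde{k}^3$, and repeats of an already-connected pair cost nothing, giving the $\guess/\tilde{k}^2$ total. That part is fine.

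The Line~\ref{line:k_rent} case, however, is not actually proved, and you say as much yourself (\enquote{Making this amortization precise\ldots is the only delicate step}). Your plan --- deduce $p>2\tilde{k}^3$, argue $\opt$ must rent, and then match the algorithm's rental \enquote{arrival-for-arrival} against $\opt$'s --- stalls precisely where the work is. First, the arrival-for-arrival matching is not established: our rental is a path in $G/A$ for \emph{our} bought set $A$, while $\opt$ rents relative to \emph{its} bought set; neither dominates the other in general. Second, even granting the per-arrival bound $d_G(s_i,t_i)\leq\guess$, you observe yourself that summing it gives only $\guess$, which is off by a factor $\tilde{k}^2$ from the claim. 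The paper's mechanism for closing this gap is different and does not go through $\opt$ at all: it argues that every execution of Line~\ref{line:k_rent} forces $\greedy$ to increase by at least $\tilde{k}^3\guess/p$ (otherwise $\greedy$ would already have bought the pair at cost exceeding $\guess$, which would have triggered an update of $\guess$). Since $\greedy$ is tied to $\guess$ through the update rule, its total increase during a phase is bounded, and this caps the aggregate rental cost. So the missing idea is exactly the contradiction with the update rule via $\greedy$'s cost increase; your appeal to $p>2\tilde{k}^3$ and to $\opt$ renting does not substitute for it, and the amortization you defer is the substance of the proof, not a routine bookkeeping step.
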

\begin{proof}
    Whenever $d_{G/S}(s_i, t_i) < \guess/(p\cdot \tilde{k}^3)$, we buy the cheapest path between $s_i$ and $t_i$. This can only happen $\tilde{k}$ times and has overall cost $\guess/\tilde{k}^2$.
    On the other hand, whenever $d_{G/S}(s_i, t_i) > 2\cdot\tilde{k}^3 \cdot \guess / p$, the cost of $\greedy$ upon arrival of $(s_i, t_i)$ must increase by at least $\tilde{k}^3\cdot\guess / p$. Indeed, otherwise, $\greedy$ must have already bought nodes worth at least  $p\cdot d_{G/S}(s_i, t_i)/2 > \tilde{k}^3 \cdot \guess$ prior to the arrival of $(s_i, t_i)$, which would have caused to update $\guess$. Hence, until we update $\guess$, the total cost incurred on Line~\ref{line:k_rent} is at most $\guess / \tilde{k}^2$.
\end{proof}

\begin{theorem} Algorithm~\ref{alg:procedure_polyk} is $O(\log \tilde{k} \log n)$-competitive for $\nwsf$.
\end{theorem}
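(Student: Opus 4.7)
The plan is to decompose the cost of Algorithm~\ref{alg:procedure_polyk} into two contributions and bound each by $O(\log \tilde{k}\log n)\cdot \opt$: (i) the ``extra'' cost incurred on Lines~\ref{line:k_buy} and~\ref{line:k_rent}, where paths are bought or rented directly without being passed to the inner routine; and (ii) the cost incurred by the successive reinitialized copies of Algorithm~\ref{alg:nwrobsf}, one per maximal interval (``run'') between consecutive updates of $\guess$. Both contributions will be summed across runs via a geometric-series argument based on Claim~\ref{claim:double_every_three}.

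I would first enumerate the distinct values taken by $\guess$ as $\guess_1<\guess_2<\cdots<\guess_U$, and let $\opt^{(i)}$ denote the optimum rent-or-buy cost for all pairs that had arrived at the moment $\guess_i$ was installed. Claim~\ref{claim:property_is_maintained} gives $\guess_i\leq \tilde{k}^2\cdot \opt^{(i+1)}$, and Claim~\ref{claim:double_every_three} gives $\opt^{(i+3)}\geq \tilde{k}\cdot \opt^{(i)}$, so the $\guess_i$ grow geometrically with ratio at least $\tilde{k}^{1/3}$. In particular $\sum_{i=1}^{U}\guess_i=O(\guess_U)=O(\tilde{k}^2\cdot \opt)$. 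Combining with Claim~\ref{claim:extra_cost}, which bounds the extra cost incurred during run $i$ by $\guess_i/\tilde{k}^2$, the total extra cost is $O(\opt)$.

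For the inner contribution, I would observe that the pairs handed to $\nwsf$ on Line~\ref{line:pass_to_nwsf} during run $i$ satisfy $\guess_i/(p\tilde{k}^3)\leq d_{G/A}(s_j,t_j)\leq 2\tilde{k}^3\guess_i/p$, so after an implicit rescaling all distances lie in $[1,\tilde{k}^6]$, matching the assumption used in Section~\ref{subsec:algo}. Hence the inner algorithm, instantiated with the randomized prize-collecting set cover subroutine, is $O(\log \tilde{k}\log \bar{n})=O(\log \tilde{k}\log n)$-competitive within run $i$ by Theorem~\ref{intro:main_thm_oblivious}, against the optimum of the sub-sequence it sees, which is at most $\opt^{(i+1)}$. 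Summing across runs and again using the geometric growth of $\opt^{(i)}$,
\begin{equation*}
    \sum_{i=1}^{U} O(\log \tilde{k}\log n)\cdot \opt^{(i+1)} \;\leq\; O(\log \tilde{k}\log n)\cdot \opt^{(U+1)} \;=\; O(\log \tilde{k}\log n)\cdot \opt.
\end{equation*}
Adding this to the $O(\opt)$ contribution from extra costs yields the desired bound.

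The main obstacle I expect is justifying that reinitializing $\nwsf$ at each $\guess$-update does not destroy competitiveness: the inner algorithm forgets previously bought nodes, so one might worry about re-paying for structure already built. The key point absorbing this worry is that the competitive guarantee of $\nwsf$ is measured \emph{per run} against the optimum of only the pairs presented in that run, and the geometric growth of $\opt^{(i)}$ (Claim~\ref{claim:double_every_three}) ensures that the sum of these per-run optima is dominated by the final $\opt$. A secondary technicality to verify is that the semi-adaptive / oblivious analysis of Theorem~\ref{intro:main_thm_oblivious} still applies inside a run, where the ``adversary'' is in effect the outer procedure that filters which pairs to forward; since the inner set-system $(\tilde{X},\S,\cost)$ depends only on $G$ and $L$, and each filtered sub-sequence is a legitimate input, this requires no extra argument beyond noting that the bound of Theorem~\ref{intro:main_thm_oblivious} is with respect to whatever sub-sequence is actually presented.
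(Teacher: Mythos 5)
Your proposal follows essentially the same route as the paper's proof: decompose the cost into the filtered (buy/rent directly) part and the part forwarded to $\nwsf$, bound the former via Claim~\ref{claim:extra_cost} together with property~\ref{property:guess}, bound the latter via the per-run competitiveness of Algorithm~\ref{alg:nwrobsf}, and sum both across runs using the geometric growth of $\opt$ from Claim~\ref{claim:double_every_three}. The decomposition, the key lemmas invoked, and the geometric-series conclusion are all the same as in the paper.

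One small point worth tightening: you justify the per-run competitiveness of the inner algorithm by citing Theorem~\ref{intro:main_thm_oblivious}, but that theorem is the final result one obtains by wrapping Algorithm~\ref{alg:nwrobsf} inside Algorithm~\ref{alg:procedure_polyk}, i.e.\,it already presupposes the reduction this theorem establishes, so citing it here is circular. What you actually want is the competitiveness of Algorithm~\ref{alg:nwrobsf} proved in Section~\ref{subsec:analysis} under the assumption that all distances lie in $[1,\tilde{k}^6]$, which is exactly what the distance filter in Algorithm~\ref{alg:procedure_polyk} guarantees for each run. The rest of your argument (including the geometric summation and the handling of the extra costs) is correct and matches the paper's.
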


\begin{proof}
    By Claim~\ref{claim:double_every_three}, denoting by $\ell_1, \ell_2, \ldots, \ell_{\text{fin}} \in \{1, \ldots, k\}$ all indices causing $\guess$ to be updated, we have that
    $$\opt_{\leq \ell_n} \geq \tilde{k}\cdot \opt_{\leq \ell_{n-3}}.$$
    Since Algorithm~\ref{alg:nwrobsf} is $O(\log \tilde{k} \log n)$-competitive for each value of $\guess$, we have that the total cost incurred in Line~\ref{line:pass_to_nwsf} (i.e.\,the cost from the pairs we actually pass to the algorithm for $\nwsf$) is at most
    $$3 \cdot \sum_{i \geq 1} \frac{1}{\tilde{k}^i} \cdot O(\log \tilde{k}\cdot\log n)\cdot \opt_{\leq k}.$$
    Since the sum converges, it remains to bound the total cost incurred in Lines~\ref{line:k_buy} and~\ref{line:k_rent}. By Claim~\ref{claim:extra_cost}, for each value of $\guess$, this is at most $2\cdot \guess / \tilde{k}^2$. By property~\ref{property:guess} which is maintained throughout, see Claim~\ref{claim:property_is_maintained}, this cost is at most twice the optimum value of servicing all terminal pairs until we reupdate $\guess$. Summing over all runs and using again Claim~\ref{claim:double_every_three}, we can bound this by
    $$2\cdot \sum_{i \geq 1} \frac{3}{\tilde{k}^i} \cdot \opt_{\leq k}.$$
    This concludes the proof.
\end{proof}

\section{Applying the result from \cite{AwerbuchAzarBartal96}}
\label{app:bbreduction}
The result given in \cite{AwerbuchAzarBartal96} is phrased in terms of \emph{task systems}. We repeat the relevant definitions here. A task system is a tuple $\mathcal{P}=(\TaskSpace, \Conf, \task, \mcost)$. Here $\TaskSpace$ is a set of \emph{tasks}, $\Conf$ is a set of \emph{configurations}, and $\task(C, t)$ is the cost of executing task $t$ in configuration $C$. The cost of moving from configuration $C$ to $C'$ is denoted by $\mcost(C, C')$.

A sequence of tasks arrives one by one. An online algorithm can move to a new configuration ~$C$ on arrival of each task and will then have to pay the cost of changing the configuration and executing the task in configuration~$C$.

A task system is said to be \emph{forcing} if $\task(C, t)$ is either $0$ or $\infty$ for all $C\in \Conf$ and $t\in \TaskSpace$.
From a forcing task system $(\TaskSpace, \Conf, \task, \mcost)$ an $M$-relaxed task system $(\TaskSpace, \Conf, \task', \mcost')$ can be obtained by setting $\mcost(C, C') = M\cdot \mcost'(C, C')$ for all $C, C'\in \Conf$ and $\task'(C, t) = \min_{C': \task(C', t) =0} \mcost(C, C')$ for $M\geq 1$.

In \cite{AwerbuchAzarBartal96} it is shown that for a forcing task system that has an algorithm $\mathcal{A}$ that is $c$-competitive against adaptive online adversaries, the following algorithm, called $\mathcal{A}'$, is $3c$-competitive for the $M$-relaxed task system against adaptive adversaries. The algorithm $\mathcal{A}'$ is very simple: for each arriving task $t$, pass it to algorithm $\mathcal{A}$ with probability $1/(2M)$ and move to the same configuration that $\mathcal{A}$ moves to. Otherwise, stay in the current configuration and pay the cost of serving the task in the current configuration.

Steiner forest problems in the buy-only setting can be formulated as forcing task systems. Configurations correspond to sets of bought edges and nodes, and tasks correspond to arriving requests. 
The Rent-Or-Buy setting with parameter $M$ then directly corresponds to a $M$-relaxed task system: the cost of moving moving from one configuration to another (i.e.\,by buying the missing edges and nodes) is $M$ times the cost of servicing the request in the current configuration (i.e.\,renting the missing edges and nodes). Hence, instantiating this procedure with an online algorithm that holds against adaptive adversaries, for instance the $O(\log^2 n)$-competitive deterministic algorithm for online node-weighted Steiner forest from~\cite{BorstEliasVenzin25}, directly gives the same guarantees in the rent-or-buy setting. 
However, there are some subtleties when one tries to apply this reduction to the $O(\log k \log n)$-competitive algorithm from~\cite{BorstEliasVenzin25}. 
This is because the guarantees of this (randomized) algorithm are only stated to hold against an oblivious adversary, a weaker adversarial setting than that of an adaptive online adversary. Still, it is possible to use their algorithm to obtain a randomized $O(\log \ktild \log n)$-competitive algorithm in the rent-or-buy setting against oblivious adversaries. We discuss this in the remainder of this section.

However, if the task sequence to $\mathcal{A}'$ is guaranteed to consists of only tasks from some set $S\subseteq \TaskSpace$, then we can consider the modified task system $(S, \Conf, \task', \mcost')$. If $\mathcal{A}$ is $c$-competitive for the original forcing task system against any adversary that only sends requests for tasks from $S$, then $\mathcal{A}'$ is $c$-competitive against adaptive adversaries on the modified task system. Hence, by the reduction from \cite{AwerbuchAzarBartal96}, $\mathcal{A}'$ is $3c$-competitive for the $p$-relaxed task system against adaptive adversaries on this modified task system. However, the behavior of algorithm $\mathcal{A}'$ does only depend on the tasks that it actually receives. So, we can say that $\mathcal{A}'$ is $3c$-competitive for the $p$-relaxed task system against adaptive adversaries that only send requests for tasks from $S$.
\begin{corollary}
  If $\mathcal{A}$ is $c$-competitive for a forcing task system against any adversary that only sends requests for tasks from a set $S$, then algorithm $\mathcal{A}'$ is $3c$-competitive for the $p$-relaxed task system against any such adversary.
\end{corollary}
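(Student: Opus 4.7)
The plan is essentially to apply the reduction of~\cite{AwerbuchAzarBartal96} in a black-box way, but to a carefully \emph{restricted} task system, and then to transfer the guarantee back to the original one by arguing that $\mathcal{A}'$ is a purely reactive algorithm whose behavior depends only on the tasks it actually sees.

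First, given the set $S \subseteq \TaskSpace$, I would introduce the restricted forcing task system $\mathcal{P}_S = (S, \Conf, \task, \mcost)$, obtained by keeping the same configurations and the same cost and movement functions, but only allowing tasks drawn from $S$. By hypothesis, $\mathcal{A}$ is $c$-competitive on $\mathcal{P}_S$ against every adaptive online adversary, since such an adversary on $\mathcal{P}_S$ is just a special case of an adversary on $\mathcal{P}$ that happens to only play tasks in~$S$. Next, I would form the $M$-relaxed task system $\mathcal{P}_S^M = (S, \Conf, \task', \mcost')$ derived from $\mathcal{P}_S$ in the way described in the excerpt, and let $\mathcal{A}'_S$ denote the algorithm built from $\mathcal{A}$ on $\mathcal{P}_S$ by the~\cite{AwerbuchAzarBartal96} procedure (pass each arriving task to $\mathcal{A}$ with probability $1/(2M)$, otherwise stay and serve in place).

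Now I would invoke the main theorem of~\cite{AwerbuchAzarBartal96} as stated in the excerpt: since $\mathcal{A}$ is $c$-competitive on the forcing task system $\mathcal{P}_S$ against adaptive online adversaries, the induced algorithm $\mathcal{A}'_S$ is $3c$-competitive for the $M$-relaxed task system $\mathcal{P}_S^M$ against adaptive adversaries. This is the only nontrivial step, and it is done entirely by appealing to the cited theorem; the point is that all of its hypotheses hold for~$\mathcal{P}_S$.

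Finally, I would compare $\mathcal{A}'_S$ with the original $\mathcal{A}'$ on the full $M$-relaxed task system of~$\mathcal{P}$. The key observation is that both algorithms are defined by the same reactive rule: on each arriving task, flip a $1/(2M)$-biased coin and, depending on the outcome, either forward to the buy-only subroutine or stay put. In particular, on any input sequence whose tasks all lie in~$S$, algorithm $\mathcal{A}'$ produces exactly the same (random) sequence of configurations as $\mathcal{A}'_S$, and hence incurs the same cost. Thus, for any adversary on the $M$-relaxed version of $\mathcal{P}$ that restricts its requests to~$S$, the competitive guarantee proved for $\mathcal{A}'_S$ transfers verbatim to $\mathcal{A}'$, yielding the desired $3c$-competitiveness. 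I do not expect a real obstacle here; the entire argument is an ``invocation + reindexing'' style proof, and the only subtlety is to phrase cleanly that $\mathcal{A}'$'s behavior is determined by the received task sequence, independently of what tasks \emph{could} have been asked for.
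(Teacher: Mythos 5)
Your proposal is correct and mirrors the paper's own argument: restrict the forcing task system to $S$, apply the~\cite{AwerbuchAzarBartal96} reduction on that restricted system, and transfer the guarantee back using the observation that $\mathcal{A}'$ is reactive and hence behaves identically whether defined on the restricted or the full task system. Your version is slightly more explicit in naming the intermediate algorithm $\mathcal{A}'_S$ and arguing its sample-path equivalence with $\mathcal{A}'$, but the substance is the same.
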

We now note that the randomized algorithm is $O(\log K \log n)$-competitive against a semi-adaptive adversary that commits to a set of at most $K$ potential requests, as the proof only uses the fact that all requests come from this set. Hence, this induces a randomized $O(\log K \log n)$-competitive algorithm for the rent-or-buy problem against a semi-adaptive adversary.

\end{document}